\newtheorem{thm}{Theorem}
\newtheorem{lem}{Lemma}
\newtheorem{cor}{Corollary}
\theoremstyle{definition}
\newtheorem{mydef}{Definition}
\begin{document}

\title{Throughput of Large One-hop Wireless Networks with General Fading}

\author{ Seyed Pooya Shariatpanahi, Babak Hossein Khalaj, Kasra Alishahi, Hamed Shah-Mansouri
 }

\maketitle
\thispagestyle{empty}
\pagestyle{empty}

\section*{Abstract}
\emph{\let\thefootnote\relax\footnotetext{  S. P. Shariatpanahi is with the School of Computer Science, Institute for Research in Fundamental Sciences (IPM), Tehran, Iran.  B. H. Khalaj is with the Department of Electrical Engineering and Advanced Communication Research Institute (ACRI), Sharif University of Technology, Tehran, Iran. K. Alishahi is with the Department of Mathematical Sciences, Sharif University of Technology, Tehran, Iran (emails: pooya@ipm.ir, khalaj@sharif.edu, alishahi@sharif.edu, hshahmansour@alumni.sharif.edu).  \\ \\ This research was in part supported by a grant from IPM. } Consider $n$ source-destination pairs randomly located in a shared wireless medium, resulting in interference between different transmissions. All wireless links are modeled by independently and identically distributed (i.i.d.) random variables, indicating that the dominant channel effect is the random fading phenomenon. We characterize the throughput of one-hop communication in such network. First, we present a closed-form expression for throughput scaling of a heuristic strategy, for a completely general channel power distribution. This heuristic strategy is based on activating the source-destination pairs with the best direct links, and forcing the others to be silent. Then, we present the results for several common examples, namely, Gamma (Nakagami-$m$ fading), Weibull, Pareto, and Log-normal channel power distributions. Finally -- by proposing an upper bound on throughput of all possible strategies for super-exponential distributions -- we prove that the aforementioned heuristic method is order-optimal for Nakagami-$m$ fading. }\\ \\
\emph{\textbf{Index Terms}}---General fading, one-hop wireless networks, random connection model, throughput scaling.

\section{Introduction}\label{Sec_Introduction}

Determining the capacity of general wireless networks remains an open problem among information-theorists \cite{Ephremides}, \cite{Goldsmith}. The main difficulty in addressing such problem is finding the optimum way to overcome and benefit from the broadcast nature of the shared wireless medium which in turn results in \emph{interference} phenomenon, further complicating the problem. In fact, in a large wireless network, a vast number of possible transmission management strategies can be adopted, thus, adding to the complexity of finding the optimum one.

The most successful attempts to address such problem are the ones which consider less-rigorously defined capacity measures -- such as network throughput -- as the number of nodes goes to infinity. In such attempts, throughput is defined as the rate at which  all source-destination pairs can communicate, under some restricting assumptions on network operation such that the problem remains tractable.

As notable examples, \cite{Gupta}, \cite{Kulkarni}, \cite{ElGamal}, and \cite{Fran} derive such throughput scaling laws. The authors in \cite{Gupta} show that -- using multi-hop strategy and spatial reuse concept -- one can achieve the aggregate throughput scaling of order $\sqrt{n}$, where $n$ is the number of the nodes in the network. In other words, based on two main themes, namely, multi-hop strategy coupled with the spatial reuse concept, the resulting throughput per node vanishes as $1/\sqrt{n}$ for large networks. In a network with multi-hop strategy, packets hop between neighboring nodes and follow their way towards destination. At each hop, transmitter power is maintained as low as possible just to ensure reliable communication with the neighboring node. Consequently, wireless channel path loss will provide the possibility of spatial reuse where far-enough nodes can remain transmitting at the same time.

The main restricting assumption in \cite{Gupta} is the decode-and-forward strategy at relay nodes, which treats interference as noise. The papers \cite{Kulkarni}, \cite{ElGamal} and \cite{Fran} also derive the same $\sqrt{n}$ aggregate throughput scaling for multi-hop schemes. These papers also treat interference as noise, and arrive at the same scaling as \cite{Gupta}, indicating a fundamental limitation attached to the multi-hop strategy which limits the aggregate throughput to $\sqrt{n}$. In order to overcome such restriction, in \cite{Ozgur1} the authors achieve linear throughput scaling (i.e. of order $n$) by proposing a hierarchical cooperation scheme. Their result indicates non-vanishing throughput per node for large networks. The authors in \cite{Ozgur1} follow two main themes. The first theme is employing distributed Multiple-Input Multiple-Output (MIMO) techniques and therefore, not treating interference as noise. The second theme is designing a careful hierarchical scheme -- by exploiting the spatial reuse concept -- for data distribution among closely located nodes. Further improvements on performance of hierarchical design are also provided in \cite{Ozgur2}, \cite{Ghaderi}, and \cite{Ozgur3}.

The core key concept in all aforementioned papers is the spatial reuse idea, which relies heavily on the path loss phenomenon. In other words, geographical separation between nodes allows the network designer to activate a large number of them simultaneously, without being disturbed by inter-node interference. However, in many situations the dominant factor in channel condition is the random fading, and not the path loss effect (see e.g., \cite{Gowikar}, \cite{Ebrahimi}, \cite{Cui_Opportunistic}, \cite{Cui}, \cite{Ebrahimi_report}, \cite{Ebrahimi_2011}, and \cite{Pooya}). In such networks, any signal transmission from a node, affects all other nodes in the network equally likely. This effect is modeled by assuming that all links in the network are i.i.d. random variables -- leading to the so-called \emph{random connection model}. Consequently, in such model the geometric structure of the network, and the concept of neighbor and distant nodes becomes of less value leading to a model in which every two nodes will be equally neighbors. As a result, such model leads to a critically interference-limited network, and thus, the network throughput will be substantially lower compared with networks in which path loss is the dominant channel characteristic.

It seems a completely different approach is consequently needed to operate such interference-limited class of networks in which the spatial reuse concept can not be adopted. In \cite{Gowikar}, authors assume an ad hoc network under the random connection model. They consider multi-hop strategy, and arrive at the throughput of order $\log(n)$ for a network with Rayleigh fading channels. Comparing their result with those of \cite{Gupta} shows the deteriorating effect of not being able to use the spatial reuse idea. In \cite{Cui_Opportunistic}, the authors arrive at the throughput of the same order of $\log(n)$ for Rayleigh fading networks -- with only two hops -- indicating the fact that under this model, multi-hop strategy is not of high importance. The papers \cite{Ebrahimi} and \cite{Ebrahimi_2011} arrive at the throughput of order $\log(n)$ even for the networks operating under one-hop strategy. Also, if one is allowed to optimize throughput over the class of finite power distributions (an assumption of mainly theoretical importance), then the results in \cite{Cui} indicate the throughput of order $\sqrt{n}$ for two-hop networks. Also, they show that adding more hops does not increase the throughput under such model. They also derive an upper bound of order $n^{1/3}$ for one-hop network operation for networks with optimized channel power distribution. This result combined with the lower bound result of order $n^{1/3}$ for one-hop networks in \cite{Pooya}, shows that the throughput of one-hop networks is of order $n^{1/3}$ when the channel power distribution is optimized. All these papers convey two important messages: First, in networks with Rayleigh fading, we can arrive at the throughput of order $\log(n)$ with just one-hop transmission and adding further hops is not beneficial. Second, even if we are able to optimize the channel power distribution, adding hops beyond the two-hop strategy is not beneficial. Accordingly, we can conclude that in the networks under the random connection model, the main research efforts should be focused on one-hop and two-hop schemes.

\subsection{Problem Model Overview}

In this paper, we assume a one-hop wireless network with $n$ source-destination pairs. We model the power of all links between all sources and destinations by independent and identically distributed (i.i.d.) random variables, with the common probability distribution function (p.d.f.) $f(\gamma)$. This link model is broadly accepted for analyzing the throughput of wireless networks (\cite{Gowikar}, \cite{Ebrahimi}, \cite{Cui_Opportunistic}, \cite{Cui}, \cite{Ebrahimi_report}, and \cite{Ebrahimi_2011}). Furthermore, we do not assume any cooperation among sources and among destinations. Also, destinations do not use interference cancellation techniques. Finally, we consider on-off transmission strategy for each source node.  In other words, each source either can transmit with the maximum allowed power, or should remain silent. The on-off strategy has been shown to be the optimum paradigm in a number of network settings (see e.g., \cite{Abouei} and \cite{Ebrahimi_2011}) while not requiring much feedback from the destination, making it simpler to implement in practice (the papers \cite{Ebrahimi}, \cite{Ebrahimi_report}, \cite{Cui}, and \cite{Pooya} also consider the same set of assumptions). With such assumptions, our problem is finding the throughput of one-hop communication in the network, where throughput denotes the maximum number of possible successful concurrent transmissions. Clearly, the interference phenomenon will limit the throughput (see Fig. \ref{Fig_Schematic}).

\begin{figure}
\begin{center}
\includegraphics[width=0.38\textwidth]{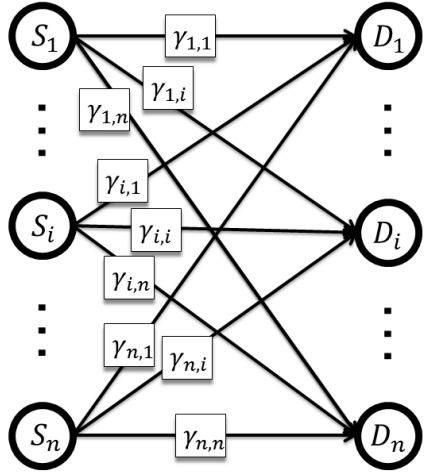}
\end{center}
\caption{Schematic of the network model. $S_i$'s and $D_i$'s are sources and destinations, respectively. The channel power between $S_i$ and $D_j$ is denoted by $\gamma_{i,j}$.\label{Fig_Schematic}}
\end{figure}

\subsection{Our Contributions Versus Previous Results}

The papers on one-hop communication under the same model fail to derive a closed-form throughput expression for the networks with general fading. Accordingly, network throughput in the case of many practical channel power distributions remains unknown. Our goal in this paper is proposing a new analysis framework which addresses this problem in a general way. Specifically, our contributions are as follow:

\begin{itemize}
\item
Showing that the complexity of determining the optimum transmission strategy is at least NP-hard.

\item
Deriving the achievable throughput  of a low-complexity heuristic transmission strategy in closed-form, for a general power distribution $f(\gamma)$.

\item
Characterizing the achievable throughput scaling for networks with Gamma (Nakagami-$m$ fading), Weibull, Log-normal, and generalized Pareto power distributions, as four case studies.

\item
Deriving throughput upper bound for the class of super-exponential power distributions\footnote{By definition, a super-exponential distribution's probability distribution function can be upper bounded by an exponentially-decaying function.}, and proving the order-optimality of the low-complexity heuristic strategy for Nakagami-$m$ fading.

\end{itemize}

In order to further distinguish between our contributions in this paper and earlier results, we consider the most related works to our paper (\cite{Ebrahimi}, \cite{Ebrahimi_report}, \cite{Ebrahimi_2011}, \cite{Pooya}, and \cite{Cui}).

In \cite{Ebrahimi} and \cite{Ebrahimi_2011}, the authors have derived the throughput scaling for the Rayleigh fading case (exponential power distribution). Their scheme is asymptotically equivalent to our scheme, and our Corollary \ref{Cor_Lower_Nakagami} for the Nakagami-$m$ fading covers their result as special case of $m=1$. In \cite{Ebrahimi_report}, they have arrived at the throughput scaling for the Log-normal power distribution as well, which coincides with our Corollary \ref{Cor_Lower_Log_Normal}. In contrast to their work, we follow a completely different approach by using a more powerful mathematical framework, namely, the \emph{intermediate order statistics}\cite{Falk}. Accordingly, we are able to characterize the throughput for a completely general power distribution in closed form. In addition, we prove the order-optimality of the heuristic scheme for Nakagami-$m$ fading, whereas they have proved it for a similar scheme under Rayleigh fading.

In \cite{Pooya}, throughput scaling is only derived for the Pareto power distribution, while here, we arrive at a completely general expression, with a generic fading distribution. In addition, the approach followed in \cite{Pooya} only considers the average throughput, while here we present a convergence in probability result for the throughput, which is a much stronger result. Also, here we discuss the computational complexity of the problem while in \cite{Pooya} no such analysis is provided. Finally, it should be mentioned that in \cite{Pooya}, no upper bounds are presented, while here we arrive at an upper bound for the class of super-exponential distributions. In summary, Corollary \ref{Cor_Lower_Pareto} of this paper completely covers the result in \cite{Pooya} for the Pareto distribution, in a much stronger probabilistic sense, and the other results are sole contributions of this work.

Finally, in \cite{Cui}, the authors investigate the case of optimal power distribution with finite power. Such assumption is not practical in general, as channel power distribution is dictated by the environment  and cannot be optimized. Since here we assume a general -- but fixed and given -- distribution, the result in \cite{Cui} serves as an upper bound for our general throughput expression.

We should also mention that a number of partially-related papers have used the concept of multiuser diversity to arrive at high throughput in networks with randomly varying channels. Their results cannot directly be compared with ours, since their network model (in terms of channel or traffic characteristics) differs from ours. For example, in \cite{Gesbert}, considering cells of $U$ users, a multi-cell network is considered and corresponding power control and user scheduling mechanism are proposed. By considering path-loss in their model, they have improved the previously derived result of order $\log\log(U)$, and reached the throughput of order $\log(U)$. Sohn et al. have arrived at the same result in \cite{Sohn} for a different application. The paper \cite{Kountouris} considers the throughput of two-hop ad-hoc networks. They also arrive at the logarithmic scaling of throughput with the number of users for the Rayleigh fading case. Moreover, they derive the throughput for networks with Log-normal and Weibull fading. Although \cite{Gesbert}, \cite{Sohn}, and \cite{Kountouris}  address a problem similar to ours, it should be noted that there are major differences between our work and those papers. First, the logarithmic scaling of the throughput in aforementioned references is due to use of path-loss effect in the model. In our paper, we do not consider path loss, resulting in a critically interference-limited scenario, which in many scenarios is a more proper model for large ad-hoc networks deployed in a limited area. Second -- in contrast to our paper -- they consider a down-link scenario. Thus, the base station is allowed to schedule users and can choose the best user for each channel realization. However, in our model, the source-destination pairs are specified and remain fixed during network operation. Consequently, the aforementioned references rely on extreme order statistics results, while we use the results from intermediate order statistics. The paper \cite{Kountouris} also considers a two-hop network, which provides the flexibility of choosing the appropriate relay for each source-destination pair. So, the results of \cite{Kountouris} are more comparable with \cite{Cui_Opportunistic},  as they both consider a two-hop network, where our model is one-hop.

Finally, another work that also relies on down-link scheduling is presented in \cite{Aggarwal}. Considering the down-link scenario leads to a multiuser diversity gain which is not present in our model. In that sense, the results in \cite{Aggarwal} should be compared with the two-hop networks as in \cite{Cui_Opportunistic}, and \cite{Kountouris}, where such flexibility is provided by the use of relay nodes.

In summary, the major contribution of our paper is deriving throughput scaling for a completely general power distribution in closed form. This achievement is due to our choice of a totally different approach where we analyze the problem in the context of \emph{intermediate order statistics}. As a result, the results of other papers for the same model can be reproduced as special cases of our main theorem (as we will illustrate in corollaries throughout the paper). In addition, it is important to note that our results provide the opportunity to derive throughput scaling for other new distributions, which have not been addressed in earlier work (e.g., Weibull distribution in Corollary \ref{Cor_Lower_Weibull}, or Nakagami-$m$ distribution in Corollary \ref{Cor_Lower_Nakagami}). Finally, we propose an upper bound for the large class of super-exponential distributions (the upper bound proposed in previous results only covers the Rayleigh fading case.). Such approach paves the path for proving the order-optimality of the heuristic scheme for Nakagami-$m$ fading (extending the previous result for Rayleigh fading).

\subsection{Notations and Paper Organization}

We use Knuth's asymptotic notation as \cite{Cormen}: $f(n)=\mbox{O}(g(n))$ if there exist positive constants $\tau$ and $n_0$ such that for all $n\geq n_0$ we have $0\leq f(n) \leq \tau g(n)$, $f(n)=\Omega(g(n))$ if $g(n)=\mbox{O}(f(n))$,  $f(n)=\Theta(g(n))$ if $f(n)=\mbox{O}(g(n))$ and $f(n)=\Omega(g(n))$. The expression ``$f_1(x) \rightarrow f_2({x}) \hspace{1em} \rm as \it \hspace{1em} x \rightarrow \infty$'' is equivalent to $\lim_{x\rightarrow\infty}\frac{f_1(x)}{f_2(x)}=1$. In addition, $c_i$'s ($i=1,\dots,7$) are strictly positive constants independent of $n$, where $n$ is the number of the source-destination pairs in the network. Also, an event $\mathbb{B}$ in the network happens with high probability (w.h.p.), if and only if, $\lim_{n \rightarrow \infty}\Pr\{\mathbb{B}\}=1$. Finally, all the logarithms are to base $e$, unless stated otherwise.

The rest of the paper is organized as follows. In Section \ref{Sec_Network_Model}, we describe the network model and explain our problem. In Section \ref{Sec_Lower} we state the main lower bound theorem and discuss the proof sketch. Then, in Section \ref{Sec_Cases}, the theorem is applied to four practical channel power distributions as case studies. In Section \ref{Sec_Upper}, we propose an upper bound for throughput of networks with super-exponential distribution, and discuss its implications. Finally, Section \ref{Sec_Conclusion} concludes the paper.

\section{Network Model and Problem Definition}\label{Sec_Network_Model}

In this section, we first explain the network model. Then, we describe the problem we consider under such network model. Finally, we discuss the computational complexity of the problem as the number of nodes in the network grows.

Consider $n$ wireless nodes transmitting their data to $n$ wireless receive nodes, in a shared medium. The transmit nodes are denoted by $S_1,\dots,S_n$, and the receive nodes are denoted by $D_1,\dots,D_n$. The node $S_i$ sends its data to the node $D_i$, and $D_i$ is only interested in the data sent by $S_i$. The communication is assumed to be completed in a single hop where it follows an on-off paradigm. In other words, at each time slot, a subset of transmit nodes (i.e. $\mathbb{S} \subset \{S_1,...,S_n\}$) are active and send their signal with maximum power $P$, while the remaining transmitters are inactive and do not transmit. Without loss of generality, we consider $P=1$ throughout the paper.

At each time slot, the channel power gain between transmitter $S_i$ and receiver $D_j$ is modeled by a random variable $\gamma_{i,j}$. The random variables $\gamma_{i,j}$ are assumed to be independent and identically distributed (i.i.d.). In other words, $\gamma_{i,j}$'s are drawn from the common probability distribution function (p.d.f.) $f(\gamma)$ in an independent manner. Also, we define $\mathbb{E}\{\gamma_{i,j}\}=\mu$, where $\mathbb{E}\{.\}$ represents the expectation operator. The channel power coefficients follow a quasi-static rule. In other words, during a single time slot the channel power gains are fixed. However, at the next time slot they are changed independently from other time slots. Fig. \ref{Fig_Schematic} shows the network model. We stress the fact that this i.i.d. model for the link power between the nodes is a widely-accepted model in the literature, and it has been used in many papers such as \cite{Ebrahimi}, \cite{Ebrahimi_2011}, \cite{Gowikar}, \cite{Cui_Opportunistic}, and \cite{Cui} .

We assume single-user decoding at each receive node. We define the Signal to Interference and Noise Ratio (SINR) at the receive node $D_i$ to be
\begin{equation}\label{Eq_SystemModel_SINR_Def}
	SINR_{i}\triangleq \frac{\gamma_{i,i}}{N_0+\sum_{S_k \in \mathbb{S}, k\neq i}{\gamma_{k,i}}},
\end{equation}
where $N_0$ is the variance of Additive White Gaussian Noise (AWGN) at the receivers. Thus, for the receive node $D_i$ to be successful at a specific time slot, we should have $SINR_{i}>\beta$ at that time slot, where $\beta$ is a constant. To make the presentation more readable we define the following terms for our network model:

\begin{mydef}\label{Def_Direct_Cross_links}
The links from $S_i$ to $D_i$, $i=1,\dots,n$ are called \emph{Direct Links}, while the links from $S_i$ to $D_j$, $i,j=1,\dots,n$, $i \neq j$, are called \emph{Cross Links}, or \emph{Interference Links}.
\end{mydef}

\begin{mydef}\label{Def_Activation_Vector}
The node activation vector, denoted by $\mathbf{x}$, is an $n \times 1$ binary vector where the $i$th element of $\mathbf{x}$ is equal to $1$, if $S_i$ is active and transmits with full power, and is $0$ otherwise.
\end{mydef}

\begin{mydef}\label{Def_Throughput}
The network throughput, denoted by $T$, is defined as the number of successful receive nodes (i.e. the ones that satisfy the $SINR$ constraint). It should be noted that $T$ depends on the active set $\mathbb{S}$.
\end{mydef}

In our model, we assume that the information of all the channel states (i.e. $\gamma_{i,j}$, $i,j=1,\dots,n$) is available at all nodes in the network.

With the above model and assumptions, we will consider the following problem: At a specific time slot, by knowing the channel power gains (i.e. $\gamma_{i,j}$'s) our goal is to find the optimum subset of active nodes such that the largest number of successful receptions at the receivers is achieved. In other words, we should address the following optimization problem
\begin{equation}\label{Eq_Model_Optimization_Problem}
    \mathbb{S}^*={\arg{\max}}_{\mathbb{S} \subset \{S_1,...,S_n\}}   T(\mathbb{S}),
\end{equation}
where the corresponding throughput will be
\begin{equation}\label{Eq_Model_Optimum_Throughput}
    T^*=T(\mathbb{S}^*).
\end{equation}

Such optimization problem is inherently complex as success or failure of each source is tightly coupled with the status of other sources.
For example, consider a source which has a strong direct link. Such source will have a high chance of success if it is activated, however, it can deteriorate the chance of other nodes if it creates strong interference links towards them. Therefore, in choosing the subset of active nodes we should consider all the information we have about direct and cross links, in order to achieve the maximum number of successful receptions.

The most trivial algorithm for solving (\ref{Eq_Model_Optimization_Problem}) is an exhaustive search over all subsets of $\{S_1,...,S_n\}$, and finding the one resulting in the largest throughput. The complexity of such algorithm is of order $2^n$, which makes it impractical. Thus, we look for more effective algorithms to find optimal or near-optimal solutions to (\ref{Eq_Model_Optimization_Problem}).

In order to get a better understanding of the problem complexity, we develop the following formulation based on the notion of activation vector. Accordingly, we will have
\begin{equation}\label{Eq_Model_SINR_Based_on_Activation_Vector}
	SINR_{i} = \frac{x_i\gamma_{i,i}}{N_0+\sum_{k\neq i}{x_k\gamma_{k,i}}},
\end{equation}
for $i=1,\dots,n$. Consequently, the successful reception condition at the receive node $D_i$ (i.e. $SINR_{i}>\beta$) will be
\begin{equation}\label{Eq_Model_SINR_Success_Based_on_Activation_Vector}
	\gamma_{i,i}x_i-\beta\sum_{k\neq i}{\gamma_{k,i}x_k} > \beta N_0.
\end{equation}
Define the $n \times n$ matrix
\begin{equation}\label{Eq_Model_MaxFS_Matrix_Definition}
\mathbf{A} \triangleq \left( \begin{array}{cccc} \gamma_{1,1} & -\beta\gamma_{2,1} & \dots & -\beta\gamma_{n,1} \\
                        -\beta\gamma_{1,2} & \gamma_{2,2}, & \dots & -\beta\gamma_{n,2}  \\
                        \vdots & \vdots & & \vdots\\
                         -\beta\gamma_{1,n} & -\beta\gamma_{2,n} & \dots & \gamma_{n,n}   \end{array} \right),
\end{equation}
and the $n \times 1$ vector
\begin{equation}\label{Eq_Model_MaxFS_Vector_Definition}
\mathbf{b} \triangleq \left( \begin{array}{c} \beta N_0 \\
                        \beta N_0  \\
                        \vdots \\
                         \beta N_0   \end{array} \right).
\end{equation}
Then, the set of successful reception conditions at all receive nodes (the set of inequalities in (\ref{Eq_Model_SINR_Success_Based_on_Activation_Vector})) can be formulated by the following set of linear inequalities
\begin{equation}\label{Eq_Model_MaxFS_Set_of_Inequalities}
    \mathbf{A}\mathbf{x} > \mathbf{b},
\end{equation}
where $\mathbf{x}$ is the activation vector. For a specific choice of the activation vector, the number of inequalities satisfied in (\ref{Eq_Model_MaxFS_Set_of_Inequalities}) is exactly equal to the number of receive nodes which satisfy the $SINR$ constraint, i.e. the network throughput. Thus, solving the optimization problem in (\ref{Eq_Model_Optimization_Problem}) is equivalent to finding the binary activation vector $\mathbf{x}^*$ resulting in the largest number of inequalities satisfied in (\ref{Eq_Model_MaxFS_Set_of_Inequalities}).

This problem is called the \emph{maximum feasible subsystem problem} (Max FS problem) and arises in many other research fields such as machine learning, political science, computational biology, and $\dots$ \cite{MaxFS1}, \cite{MaxFS2}. This problem (in the case of binary vector $\mathbf{x}$, which is the case considered in this paper) is shown to be at least as hard as finding a maximal independent set in a graph\footnote{By definition, an independent set in a graph is a set of vertices, where no two of them are adjacent. Accordingly, a maximal independent set in a graph is an independent set which is not a subset of any other independent set.}, which is NP-hard \cite{MaxFS1}. A number of efficient sub-optimum algorithms have been developed to address this problem \cite{MaxFS2}, \cite{MaxFS3}. The essence of all these algorithms is proposing a numerical approach with no rigorous performance guarantee. Therefore, we cannot exploit them to get closed-form results for the network throughput.

\section{Lower Bound on the Throughput}\label{Sec_Lower}

In this section, we first propose a simple heuristic method to design the activation vector, and then analyze the resulting throughput. The main idea of the underlying scheme is to activate the source-destination pairs with the best direct links, and let the remaining pairs be silent. Thus, in this method, we do not use the information regarding the power of interference links in order to decide which source nodes to activate. Therefore, the complexity of this scheme is polynomial in terms of the number of nodes. Theorem \ref{Th_Main_Theorem_Lower_Bound} is the main result characterizing the throughput of the network operated with this strategy which holds for a class of channel power distributions with a given number of properties. Before stating the theorem, we need to define these properties:

\begin{mydef}\label{Def_Condition_Set1}
	The random variable $X$ is said to satisfy the condition set 1 if its p.d.f., $f(x)$, and its cumulative distribution function (c.d.f.), $F(x)$, satisfy the following conditions:
\begin{equation}\label{Eq_Condition_Set1_1}
	F^{-1}(1) \rightarrow \infty,
\end{equation}
\begin{equation}\label{Eq_Condition_Set1_2}
 \lim_{x \rightarrow \infty} x h(x) =c_0 >0, \hspace{1em} \rm or \it \hspace{1em} \lim_{x \rightarrow \infty} \frac{\textrm{d}}{\textrm{d} x}\frac{1}{h(x)}=\rm 0 ,
\end{equation}
 where $F^{-1}(x)$ represents the inverse of the function $F(x)$, $h(x) \triangleq \frac{f(x)}{1-F(x)}$, and $c_0$ is a constant.
\end{mydef}

\begin{mydef}\label{Def_Condition_NHT_RV}
	The random variable $X$ is said to be of the Non-Heavy-Tailed (NHT), or super-exponential, type if $\mathbb{E}\{e^{t X}\}<\infty$, for some $t>0$ (Cram\'{e}r's condition).
\end{mydef}

\begin{mydef}\label{Def_Condition_HT_RV}
	The random variable $X$ is said to be of the Heavy-Tailed (HT), or sub-exponential, type if it has one of the following properties\footnote{It should be noted that this is not an exhaustive list of heavy-tailed distributions. We only mention the ones for which our theorem applies.}:
\begin{itemize}

\item
Regularly varying tail:

\begin{equation}\label{Eq_Def_HT_RV_Regularly_Varying_Tail}
    1-F(x)=\frac{L(x)}{x^\alpha}, \hspace{3pt} \textrm{for} \hspace{3pt} x>0
\end{equation}
for $\alpha>2$, and where, $L(x)$ is a slowly-varying function.

\item
Log-normal type tail:
\begin{equation}\label{Eq_Def_HT_RV_LogNormal_Tail}
    1-F(x)\sim cx^\beta e^{-\lambda\log^\gamma x}, \hspace{3pt} \textrm{as} \hspace{3pt} x\rightarrow \infty
\end{equation}
for $\gamma>1$ and $\lambda>0$.

\item
Weibull-like tail:
\begin{equation}\label{Eq_Def_HT_RV_Weibull_Like_Tail}
    1-F(x) \sim cx^\beta e^{-\lambda x^\alpha}, \hspace{3pt} \textrm{as} \hspace{3pt} x\rightarrow \infty
\end{equation}
for $0<\alpha<0.5$ and $\lambda>0$.
\end{itemize}

For a rigorous definition of heavy-tailed distributions refer to \cite{Feller}.

\end{mydef}

Now, we are ready to state the main theorem:
\begin{thm}\label{Th_Main_Theorem_Lower_Bound}
If the power distribution of the underlying wireless channel (with the cumulative distribution function (c.d.f.) $F(x)$) satisfies the condition set 1, and is of NHT or HT type (as defined in definitions \ref{Def_Condition_NHT_RV} and \ref{Def_Condition_HT_RV}, respectively), then in a one-hop wireless network with $n$ source-destination pairs, throughput is lower bounded (w.h.p.) by
\begin{equation}\label{Eq_Main_Theorem_Lower_Bound_General_Throughput}
 T=\Omega\left( G^{-1}\left(n\right)\right) ,
\end{equation}
where,
\begin{equation}\label{Eq_Main_Theorem_Lower_Bound_G_X}
	G(x) \triangleq \frac{x}{1-F\left(\beta\mu x/2\right)},
\end{equation}
in which, $\beta$ is the $SINR$ target, and $\mu$ is the average channel power.
\end{thm}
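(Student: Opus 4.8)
The plan is to analyze exactly the heuristic described before the theorem: order the direct links $\gamma_{1,1},\dots,\gamma_{n,n}$ as $\gamma_{(1)}\ge\cdots\ge\gamma_{(n)}$, activate the $k$ sources owning the largest direct links for a suitably chosen $k=k(n)$, and silence the rest. The crucial structural observation I would exploit is that the active set $\mathbb{S}$ and the values $\{\gamma_{i,i}:i\in\mathbb{S}\}$ are functions of the \emph{direct} links only, whereas the interference seen at an active receiver $D_i$, namely $I_i=\sum_{j\in\mathbb{S},\,j\ne i}\gamma_{j,i}$, is built entirely from \emph{cross} links, which are independent of the direct links. Hence, conditioned on $\mathbb{S}$ and on the realized top-$k$ direct links, the variables $\{I_i\}_{i\in\mathbb{S}}$ are independent across $i$ (they involve disjoint cross links) and each is a sum of $k-1$ i.i.d.\ copies of the channel power with mean $\mu$. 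Writing the success condition (\ref{Eq_Model_SINR_Success_Based_on_Activation_Vector}) as $I_i<\gamma_{i,i}/\beta-N_0$, the throughput of the heuristic becomes $T=\sum_{i\in\mathbb{S}}\mathbf{1}\{I_i<\gamma_{i,i}/\beta-N_0\}$, a sum of \emph{conditionally independent} Bernoulli variables. This reduction is what makes a clean probabilistic lower bound possible.

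Next I would choose $k$ so that every active direct link comfortably dominates the typical interference. Since each active pair has $\gamma_{i,i}\ge\gamma_{(k)}$, it suffices to pick $k$ with $\gamma_{(k)}\ge 2\beta\mu k$ (with high probability); then the success event for pair $i$ is implied by $I_i<2\mu k-N_0$, and because $\mathbb{E}\{I_i\}=(k-1)\mu$ while the channel power has finite variance $\sigma^2$ in every case covered by Definitions \ref{Def_Condition_NHT_RV} and \ref{Def_Condition_HT_RV} (Cram\'{e}r's condition forces all moments finite in the NHT case, while $\alpha>2$ and the Weibull-/Log-normal-type tails force finite variance in the HT case), Chebyshev's inequality gives $\Pr\{I_i\ge 2\mu k-N_0\}=\mathrm{O}(1/k)\to0$. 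Thus each conditional Bernoulli has success probability $1-\mathrm{O}(1/k)$, and since the summands are conditionally independent, a standard second-moment argument on the count yields $T=(1-o(1))\,k$ with high probability. The NHT/HT dichotomy enters only here, to guarantee that the interference concentrates around its mean.

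It then remains to determine the largest admissible $k$, i.e.\ the largest $k$ for which $\gamma_{(k)}\ge 2\beta\mu k$ holds w.h.p.; this is where I expect the real work to lie, and where the \emph{intermediate order statistics} machinery of \cite{Falk} becomes essential. Condition (\ref{Eq_Condition_Set1_1}) guarantees unbounded support so that the relevant thresholds grow, and the von Mises--type regularity in (\ref{Eq_Condition_Set1_2}) is exactly what guarantees that the intermediate order statistic concentrates, $\gamma_{(k)}/F^{-1}(1-k/n)\to1$ in probability, provided $k\to\infty$ and $k/n\to0$. Replacing $\gamma_{(k)}$ by $F^{-1}(1-k/n)$, the constraint $\gamma_{(k)}\ge 2\beta\mu k$ becomes $n\bigl(1-F(2\beta\mu k)\bigr)\ge k$, i.e.\ $k/\bigl(1-F(2\beta\mu k)\bigr)\le n$, whose largest solution is of order $\tilde{G}^{-1}(n)$ with $\tilde{G}(x)=x/(1-F(2\beta\mu x))$. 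For every regularity class of Definition \ref{Def_Condition_HT_RV} and for the NHT case, a constant factor inside the argument of $F$ alters the growth rate of the solution of $\tilde{G}(x)=n$ only by a constant factor, so $\tilde{G}^{-1}(n)=\Theta\!\bigl(G^{-1}(n)\bigr)$ with $G$ as in (\ref{Eq_Main_Theorem_Lower_Bound_G_X}); choosing $k$ a fixed fraction below this critical value absorbs the $o(1)$ fluctuation of the order statistic. Combining with the previous paragraph gives $T=\Omega\bigl(G^{-1}(n)\bigr)$ w.h.p.

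The main obstacle is precisely the convergence $\gamma_{(k)}/F^{-1}(1-k/n)\to1$: making $\gamma_{(k)}\ge 2\beta\mu k$ rigorous and uniform requires the full intermediate-order-statistics theory rather than elementary extreme-value bounds, and it is the conditions in Definition \ref{Def_Condition_Set1} that license this step. Verifying that each candidate distribution satisfies those conditions, and that the constant inside $F$ is asymptotically harmless, is the delicate part that the subsequent corollaries must discharge.
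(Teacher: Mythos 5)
Your proposal is correct, uses the same heuristic (activate the top-$k$ direct links) and the same core tool for the signal term (the intermediate-order-statistics concentration $\gamma_{(k)}/F^{-1}(1-k/n)\to 1$, which in the paper is Lemma \ref{Lemma_Lower_Falk} combined with Lemma \ref{Lemma_Lower_an_bn} showing $a_n/b_n\to\infty$), but it treats the interference with a genuinely different probabilistic engine. The paper demands that \emph{all} active pairs succeed simultaneously, so it takes a union bound over the $t_1$ receivers; this forces $t_1\Pr\{I_i\ge\phi\}\to0$ and hence requires large-deviations machinery, applied in two separate cases: Cram\'{e}r-type exponential bounds for NHT (Lemma \ref{Lemma_Lower_LDP_NHT}) and the Mikosch--Nagaev heavy-tailed asymptotics for HT (Lemma \ref{Lemma_Lower_LDP_HT}). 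You instead exploit the observation that the $I_i$'s are built from cross links that are disjoint across receivers and independent of the direct links, so conditionally they are independent sums of $k-1$ i.i.d. terms; Chebyshev then gives per-receiver failure probability $\mbox{O}(1/k)$, which is indeed too weak for a union bound, but counting failures (Markov or a second moment on the count) still yields $T=(1-o(1))k$ w.h.p. This buys a unified argument (finite variance holds under Cram\'{e}r's condition and under all three HT tail classes, so no case split and no heavy-tailed large-deviation theory is needed) at the price of a weaker conclusion -- most rather than all activated pairs succeed -- which is immaterial for the theorem as stated, since throughput counts successful receptions. Two loose ends in your write-up, both benign: (i) your threshold constant $2\beta\mu$ differs from the paper's $\beta\mu/2$, so you need $\tilde G^{-1}(n)=\Theta(G^{-1}(n))$; you assert this class by class, but it holds in full generality: writing $\hat G(z)\triangleq z/(1-F(z))$, one has $G_a^{-1}(n)=\frac1a\hat G^{-1}(an)$, and monotonicity of $F$ gives $\hat G(rz)\ge r\hat G(z)$ for $r\ge 1$, hence $\hat G^{-1}(rm)\le r\hat G^{-1}(m)$, so changing the constant moves $G^{-1}(n)$ by at most a constant factor; (ii) the concentration of $\gamma_{(k)}$ needs not only Falk's asymptotic normality but also that the scale $b_n$ is negligible relative to the centering $a_n$, which is exactly what condition set 1 delivers (Lemma \ref{Lemma_Lower_an_bn}), and your fixed-fraction choice of $k$ below the critical value then absorbs the fluctuation, as you say. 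Incidentally, your constant-fraction choice also sidesteps a blemish in the paper's own proof, which activates only $t_1=(1-\epsilon)t^{1-\delta}$ pairs and therefore literally establishes $\Omega\bigl((G^{-1}(n))^{1-\delta}\bigr)$ rather than the claimed $\Omega(G^{-1}(n))$; your argument (like an easy repair of the paper's, taking $t_1$ a small constant times $t$) gives the stated order directly.
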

\begin{proof}
See Appendix A.
\end{proof}

First, one should note that the distributions to which Theorem \ref{Th_Main_Theorem_Lower_Bound} applies are completely general, since it covers both the super-exponential and sub-exponential distributions. Also, condition set 1 is completely general and is satisfied by most practical distributions \cite{Arnold}. Later (in corollaries \ref{Cor_Lower_Nakagami} to \ref{Cor_Lower_Log_Normal})  we will provide several examples of how to apply this theorem to different distributions.

Next, we describe the main ideas behind the proof of this theorem. Theorem \ref{Th_Main_Theorem_Lower_Bound} states that there exists a method according to which we can have order of $G^{-1}(n)$ concurrent successful transmissions in the network. In other words, it states that we can find an activation vector which results in the order of $G^{-1}(n)$ satisfied inequalities in (\ref{Eq_Model_MaxFS_Set_of_Inequalities}). The underlying method relies on activating source-destination pairs with the best direct links. Suppose we activate $t_1$ of them in the following way:

First, sort $\gamma_{i,i}$'s to get the order statistics $\gamma_{(i),(i)}$'s such that\footnote{In other words $\gamma_{(i),(i)}$'s are the sorted version of  $\gamma_{i,i}$'s.}
\begin{equation}\label{Eq_Th1_Proof_Idea_Sorting}
	\gamma_{(1),(1)} \leq \gamma_{(2),(2)} \leq  \dots \leq \gamma_{(n),(n)}.
\end{equation}
Accordingly, we can sort the corresponding source-destination pairs $S_i$-$D_i$ as $S_{(i)}$-$D_{(i)}$. In other words, the pair $S_{(i)}$-$D_{(i)}$ has a better or the same quality direct link comparing with $S_{(j)}$-$D_{(j)}$ if $i > j$. Then, the proposed candidate subset for the set of active nodes will be:
\begin{equation}\label{Eq_Th1_Proof_Idea_Candidate_Set}
	\mathbb{S}= \{ S_{(i)} |  i=n-t_1+1, \dots, n\}.
\end{equation}

This idea is illustrated in Fig. \ref{Fig_Proof_Sketch} where the source-destination pairs are sorted according to their direct links, and the first $t_1$ strongest ones are activated. Then, the main point of Theorem \ref{Th_Main_Theorem_Lower_Bound} is that we can choose $t_1$ as large as $G^{-1}(n)$ ($G(x)$ is defined in (\ref{Eq_Main_Theorem_Lower_Bound_G_X})), where all receptions will be successful. In other words, if we activate $G^{-1}(n)$ source-destination pairs with the best direct links, and force the remaining pairs to be inactive, all active pairs will satisfy the $SINR$ constraint.

Next, we describe why all these receptions satisfy the $SINR$ constraint. First, we note that the most critical transmission among active pairs is the one from $S_{(n-t_1+1)}$ to $D_{(n-t_1+1)}$. If such transmission is successful, then all other transmissions will be successful as well (see Fig. \ref{Fig_Proof_Sketch}). That is due to the fact that all active pairs experience statistically equivalent interference, while, this pair has the weakest direct link among all of them. Therefore, we should choose $t_1$ as large as to ensure that this specific transmission will be successful. If we enlarge $t_1$ (i.e. the number of active nodes) it means we are choosing a larger set of active transmitters. This, in return, means that the direct channel power of the weakest link among active pairs (i.e. $\gamma_{(n-t_1+1),(n-t_1+1)}$) decreases. Also, enlarging $t_1$ will result in an increase in the interference level of receivers. One can see that the power of the weakest direct link will be of order $F^{-1}(1-t_1/n)$, and the increase of the interference is linear with $t_1$ (see Fig. \ref{Fig_Proof_Sketch}). If power of direct link of this transmission pair is greater than the power of the interference imposed on this pair by other transmissions, then this reception will be successful. Thus, the maximum $t_1$ that we are able to choose occurs when the order of these two quantities coincide. Roughly speaking, it can be examined that such situation happens when we activate the order of $G^{-1}(n)$ nodes. Therefore, by activating the order of $G^{-1}(n)$ sources, we still have non-vanishing $SINR$ at all corresponding receivers, while all of them satisfy the $SINR$ constraint (as will be shown rigorously in the proof, in Appendix A).

The fact that $\gamma_{(n-t_1+1),(n-t_1+1)}$ is of order $F^{-1}(1-t_1/n)$ comes from a result in the ``Intermediate Order Statistics'' context which is explained in Lemma \ref{Lemma_Lower_Falk} in Appendix A. Also, the fact that the interference increases linearly with the number of active nodes can be made precise by the help of results from the ``Large Deviations Principle'' context, which is explained in Lemmas \ref{Lemma_Lower_LDP_NHT} and \ref{Lemma_Lower_LDP_HT} in Appendix A. The rigorous proof of Theorem \ref{Th_Main_Theorem_Lower_Bound} along with mathematical techniques used are provided in Appendix A.

\begin{figure}
\begin{center}
\includegraphics[width=0.7\textwidth]{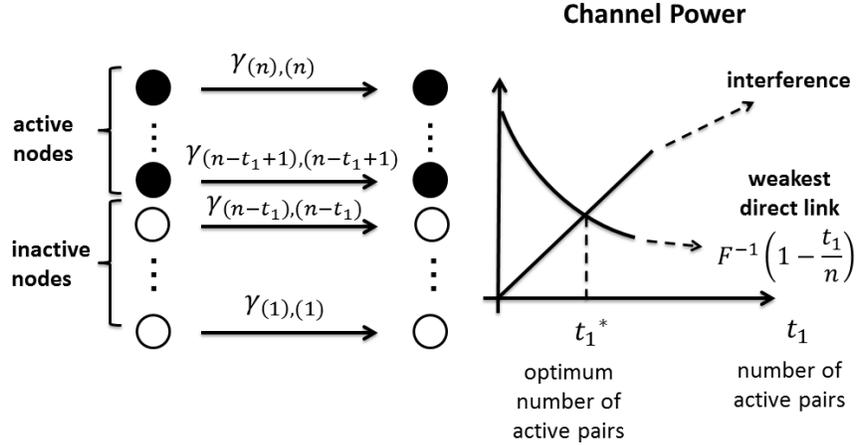}
\end{center}
\caption{The proposed scheme and the proof sketch.\label{Fig_Proof_Sketch}}
\end{figure}

\section{Case Studies}\label{Sec_Cases}

Theorem \ref{Th_Main_Theorem_Lower_Bound} holds for the class of distributions satisfying the condition set 1, which are of NHT or HT type (as defined in definitions \ref{Def_Condition_NHT_RV} and \ref{Def_Condition_HT_RV}, respectively). As mentioned earlier, these conditions are very general, and accordingly, the theorem applies to a wide range of distributions. Next, we present four corollaries as examples of applying this theorem to specific distributions. In order to apply Theorem \ref{Th_Main_Theorem_Lower_Bound} to a specific channel power distribution, we should follow three steps. First, we have to check if that distribution satisfies the conditions of theorem. Second, we should find the order of growth of $G^{-1}(x)$ as $x$ gets large. Finally, from the theorem we know that achievable throughput is of order $G^{-1}(n)$  (rigorous proofs of corollaries can be found in Appendix B).

\begin{cor}[Achievable Throughput of Networks with Nakagami-$m$ Fading (Gamma Power Distribution)]\label{Cor_Lower_Nakagami}
If in an environment the channel gain behaves under the Nakagami-$m$ distribution, then the channel power follows the Gamma distribution whose p.d.f. is given by (all the distribution functions in this paper represent power of the channel, and thus, are defined for just non-negative values)
\begin{equation}\label{Eq_Cor_Nakagami_PDF}
	f(x)=x^{m-1}\frac{e^{-mx/\Omega}}{\Gamma(m)\left(\frac{\Omega}{m}\right)^m},
\end{equation}
and the corresponding c.d.f. is as follows
\begin{equation}\label{Eq_Cor_Nakagami_CDF}
	F(x)=\frac{\gamma \left( m,\frac{mx}{\Omega} \right) } {\Gamma(m)},
\end{equation}
where $\Gamma(.)$ represents the Gamma function, $\gamma(.,.)$ is the lower incomplete Gamma function,  $m$ and $\Omega$ are parameters of the distribution.
In such network, the achievable throughput will be
\begin{eqnarray}\label{Eq_Cor_Nakagami_Throughput}
	T=\Omega\left(\log(n)\right).
\end{eqnarray}

\end{cor}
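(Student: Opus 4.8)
The plan is to apply Theorem~\ref{Th_Main_Theorem_Lower_Bound} directly, following the three-step recipe announced in the preamble to this section. First I would verify that the Gamma power distribution with p.d.f. \eqref{Eq_Cor_Nakagami_PDF} satisfies the hypotheses of the theorem, namely that it obeys condition set~1 and is of NHT (super-exponential) type. The NHT property is immediate: the Gamma distribution has a finite moment generating function in a neighborhood of the origin, so Cram\'er's condition in Definition~\ref{Def_Condition_NHT_RV} holds. For condition set~1, the requirement $F^{-1}(1)\to\infty$ is clear since the Gamma distribution has unbounded support, and I would check the hazard-rate condition \eqref{Eq_Condition_Set1_2} using the known asymptotics of the hazard rate $h(x)=f(x)/(1-F(x))$ of a Gamma variable, which tends to the constant $m/\Omega$ as $x\to\infty$; this makes $\tfrac{\mathrm{d}}{\mathrm{d}x}\tfrac{1}{h(x)}\to 0$, satisfying the second alternative in \eqref{Eq_Condition_Set1_2}.

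Second, and this is the heart of the computation, I would determine the growth order of $G^{-1}(n)$, where $G(x)=x/\bigl(1-F(\beta\mu x/2)\bigr)$. Since the theorem only asserts an $\Omega$-bound, it suffices to control orders of magnitude. The key asymptotic input is the tail behavior of the Gamma c.d.f.: as $y\to\infty$, the upper tail satisfies $1-F(y)\sim \tfrac{1}{\Gamma(m)}\,(my/\Omega)^{m-1}e^{-my/\Omega}$, i.e. it decays essentially exponentially with a polynomial prefactor. Substituting $y=\beta\mu x/2$ gives
\begin{equation}\label{Eq_Proposal_G_growth}
 G(x)=\frac{x}{1-F(\beta\mu x/2)}\sim c\, x\,(\beta\mu x/2)^{-(m-1)}\,e^{m\beta\mu x/(2\Omega)},
\end{equation}
so that $\log G(x)$ grows linearly in $x$ to leading order. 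Inverting, $x = G^{-1}(n)$ must satisfy $\log G(x)\approx m\beta\mu x/(2\Omega)\sim \log n$, which forces $G^{-1}(n)=\Theta(\log n)$; the polynomial prefactor and the leading factor $x$ in \eqref{Eq_Proposal_G_growth} contribute only lower-order corrections to $\log G(x)$ and do not affect the order of the inverse.

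Third, having established $G^{-1}(n)=\Theta(\log n)$, the conclusion $T=\Omega(\log n)$ follows immediately from \eqref{Eq_Main_Theorem_Lower_Bound_General_Throughput}. I expect the only genuinely delicate point to be the second step: making the inversion of $G$ rigorous and confirming that the subexponential factors ($x$ and the polynomial $x^{-(m-1)}$) are asymptotically negligible compared to the exponential term when solving $G(x)=n$. This is a standard argument — one takes logarithms of both sides of $G(x)=n$, notes that $\log x$ and $(m-1)\log x$ are $o(x)$, and concludes that the exponential term dominates — but it must be carried out carefully to pin down the constant inside the logarithm and to justify the claimed $\Theta$ rather than merely $O$ or $\Omega$. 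Everything else is routine verification of hypotheses.
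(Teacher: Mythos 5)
Your proposal is correct and follows essentially the same route as the paper's own proof: verify Cram\'er's condition and condition set~1, use the Gamma tail asymptotics $1-F(y)\sim\frac{1}{\Gamma(m)}(my/\Omega)^{m-1}e^{-my/\Omega}$ (the paper derives this via the incomplete-Gamma identity $\Gamma(a,x)\sim x^{a-1}e^{-x}$) to conclude $\log G(x)=\Theta(x)$, and invert to get $G^{-1}(n)=\Theta(\log n)$, hence $T=\Omega(\log n)$. The only cosmetic difference is that you check condition set~1 directly through the hazard-rate limit $h(x)\to m/\Omega$, whereas the paper delegates that verification to a cited reference.
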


\begin{cor}[Achievable Throughput of Networks with Weibull Power Distribution]\label{Cor_Lower_Weibull}

A random variable is said to have Weibull distribution if its p.d.f. and c.d.f. are as follows, respectively:
\begin{eqnarray}\label{Eq_Cor_Weibull_PDF}
	f(x) =\frac{k}{\lambda}\left(\frac{x}{\lambda}\right)^{k-1}e^{-\left(x/\lambda\right)^k},
\end{eqnarray}
\begin{equation}\label{Eq_Cor_Weibull_CDF}
F(x)=1-e^{-(x/\lambda)^k},
\end{equation}
where $k$ and $\lambda$ are parameters of the distribution. Then, the throughput of the network will be
\begin{equation}\label{Eq_Cor_Weibull_Throughput}
 T=\Omega\left(\left(\log(n)\right)^{1/k}\right).
\end{equation}
\end{cor}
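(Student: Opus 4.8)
The plan is to apply Theorem \ref{Th_Main_Theorem_Lower_Bound} directly, following the three-step recipe described at the start of Section \ref{Sec_Cases}: verify that the Weibull law satisfies the hypotheses, compute the auxiliary function $G$ of (\ref{Eq_Main_Theorem_Lower_Bound_G_X}) and invert it asymptotically, and then read off the throughput order from (\ref{Eq_Main_Theorem_Lower_Bound_General_Throughput}). I begin with condition set 1 (Definition \ref{Def_Condition_Set1}). Since the Weibull support is $[0,\infty)$, we immediately have $F^{-1}(1)\rightarrow\infty$, so (\ref{Eq_Condition_Set1_1}) holds. For (\ref{Eq_Condition_Set1_2}) I would compute the hazard rate
\begin{equation}
h(x)=\frac{f(x)}{1-F(x)}=\frac{k}{\lambda^k}\,x^{k-1},
\end{equation}
whence $1/h(x)=(\lambda^k/k)\,x^{1-k}$ and $\frac{\textrm{d}}{\textrm{d}x}\frac{1}{h(x)}=\frac{\lambda^k}{k}(1-k)x^{-k}\rightarrow 0$ as $x\rightarrow\infty$ for every $k>0$, so the second alternative in (\ref{Eq_Condition_Set1_2}) is satisfied.

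Next I must locate the Weibull law in the NHT/HT dichotomy, and this is where a case analysis in $k$ enters. Because the tail is $1-F(x)=e^{-(x/\lambda)^k}$, for $k\ge 1$ the moment generating function $\mathbb{E}\{e^{tX}\}$ is finite for suitable $t>0$ (for $k=1$ this is the exponential law, for $k>1$ the tail decays faster than any exponential), so the distribution is NHT in the sense of Definition \ref{Def_Condition_NHT_RV}; for $0<k<0.5$ the tail matches the Weibull-like form (\ref{Eq_Def_HT_RV_Weibull_Like_Tail}) with exponent $\alpha=k$, so the distribution is HT in the sense of Definition \ref{Def_Condition_HT_RV}. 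In each relevant range Theorem \ref{Th_Main_Theorem_Lower_Bound} therefore applies.

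Finally, with $1-F(x)=e^{-(x/\lambda)^k}$ the function in (\ref{Eq_Main_Theorem_Lower_Bound_G_X}) becomes
\begin{equation}
G(x)=\frac{x}{1-F(\beta\mu x/2)}=x\,\exp\!\left(\left(\frac{\beta\mu x}{2\lambda}\right)^{k}\right).
\end{equation}
To invert $G$ I would set $G(x)=n$ and take logarithms, obtaining $\log x+(\beta\mu x/2\lambda)^k=\log n$; since the power term $(\beta\mu x/2\lambda)^k$ dominates $\log x$ as $x\rightarrow\infty$, the leading balance is $(\beta\mu x/2\lambda)^k\sim\log n$, which gives
\begin{equation}
G^{-1}(n)\sim\frac{2\lambda}{\beta\mu}\,\bigl(\log n\bigr)^{1/k}=\Theta\!\left((\log n)^{1/k}\right).
\end{equation}
Substituting into (\ref{Eq_Main_Theorem_Lower_Bound_General_Throughput}) yields $T=\Omega((\log n)^{1/k})$, which is exactly (\ref{Eq_Cor_Weibull_Throughput}).

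The inversion of $G$ is routine; the only care needed there is justifying that $\log x$ is asymptotically negligible against $(\beta\mu x/2\lambda)^k$, the step that pins down the exponent $1/k$. I expect the main obstacle to be the classification step rather than the computation: the Weibull family straddles the NHT/HT boundary as $k$ varies, so one must confirm the hypotheses of Theorem \ref{Th_Main_Theorem_Lower_Bound} separately on each range of $k$ (and be mindful that the stated HT condition (\ref{Eq_Def_HT_RV_Weibull_Like_Tail}) only reaches $\alpha<0.5$) rather than appealing to a single uniform case.
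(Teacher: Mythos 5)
Your proposal is correct and follows essentially the same route as the paper's own proof: verify condition set 1, classify the Weibull law as NHT for $k\geq 1$ and as HT (via the Weibull-like tail of Definition \ref{Def_Condition_HT_RV}) for $0<k<0.5$, compute $G(x)=x\,e^{(\beta\mu x/2\lambda)^k}$, and invert to get $G^{-1}(n)=\Theta\left((\log n)^{1/k}\right)$, hence $T=\Omega\left((\log n)^{1/k}\right)$. The only differences are cosmetic: you verify condition set 1 directly through the hazard rate rather than citing prior work, and you correctly flag, as the paper does, that the argument covers $k\in(0,0.5)\cup[1,\infty)$ but leaves the range $[0.5,1)$ uncovered.
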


\begin{cor}[Achievable Throughput of Networks with Generalized Pareto Power Distribution]\label{Cor_Lower_Pareto}
Consider the following channel power distribution
\begin{equation}\label{Eq_Cor_Pareto_PDF}
    f(x)=\frac{\alpha}{(1+x)^{\alpha+1}},
\end{equation}
\begin{equation}\label{Eq_Cor_Pareto_CDF}
    F(x)=1-\frac{1}{(1+x)^\alpha},
\end{equation}
where $\alpha>2$. Then, the achievable throughput in this case will be
\begin{eqnarray}\label{Eq_Cor_Pareto_Throughput}
	T=\Omega\left(n^{1/(1+\alpha)}\right).
\end{eqnarray}

\end{cor}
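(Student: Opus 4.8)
The plan is to apply Theorem \ref{Th_Main_Theorem_Lower_Bound} directly, which requires three things: (i) verifying that the generalized Pareto distribution satisfies condition set 1; (ii) confirming it is of heavy-tailed (HT) type in the sense of Definition \ref{Def_Condition_HT_RV}; and (iii) computing the growth order of $G^{-1}(n)$, where $G(x)=x/(1-F(\beta\mu x/2))$. Steps (i) and (ii) should be routine, and the real content lives in step (iii).

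First I would verify condition set 1. Since $F(x)=1-(1+x)^{-\alpha}$ has $F^{-1}(1)=\infty$, equation \eqref{Eq_Condition_Set1_1} holds. For the hazard rate $h(x)=f(x)/(1-F(x))$, a direct substitution of \eqref{Eq_Cor_Pareto_PDF} and \eqref{Eq_Cor_Pareto_CDF} gives
\begin{equation}
h(x)=\frac{\alpha/(1+x)^{\alpha+1}}{(1+x)^{-\alpha}}=\frac{\alpha}{1+x},
\end{equation}
so $xh(x)=\alpha x/(1+x)\to\alpha$ as $x\to\infty$, which satisfies the first alternative in \eqref{Eq_Condition_Set1_2} with $c_0=\alpha>0$. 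Next, to place the distribution in the HT class I would match it to the regularly-varying-tail form \eqref{Eq_Def_HT_RV_Regularly_Varying_Tail}: writing $1-F(x)=(1+x)^{-\alpha}=L(x)/x^{\alpha}$ with $L(x)=(x/(1+x))^{\alpha}\to 1$, the function $L$ is slowly varying, and the hypothesis $\alpha>2$ is exactly the requirement of that case. Hence Theorem \ref{Th_Main_Theorem_Lower_Bound} applies.

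The core computation is the asymptotics of $G^{-1}$. I would first write $G$ explicitly using $1-F(\beta\mu x/2)=(1+\beta\mu x/2)^{-\alpha}$, giving
\begin{equation}
G(x)=\frac{x}{(1+\beta\mu x/2)^{-\alpha}}=x\left(1+\tfrac{\beta\mu}{2}x\right)^{\alpha}.
\end{equation}
For large $x$ this behaves like a constant times $x^{1+\alpha}$, since $(1+\beta\mu x/2)^{\alpha}\sim(\beta\mu/2)^{\alpha}x^{\alpha}$. Setting $G(x)=n$ and solving for $x$ then yields $x=G^{-1}(n)=\Theta(n^{1/(1+\alpha)})$; the constant factors $(\beta\mu/2)^{\alpha}$ are absorbed into the $\Theta$ (and hence the $\Omega$) because they are independent of $n$. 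Plugging this into \eqref{Eq_Main_Theorem_Lower_Bound_General_Throughput} gives $T=\Omega(n^{1/(1+\alpha)})$, which is exactly \eqref{Eq_Cor_Pareto_Throughput}.

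The main obstacle, such as it is, is purely the inversion step: $G$ is not invertible in closed form for finite $x$, so one cannot write $G^{-1}$ explicitly and must instead argue at the level of asymptotic order. I would make this rigorous by sandwiching $G$ between two pure monomials in $x$ — for instance $c_1 x^{1+\alpha}\le G(x)\le c_2 x^{1+\alpha}$ for all sufficiently large $x$ and suitable constants $c_1,c_2>0$ — and then inverting the monomial bounds to conclude $G^{-1}(n)=\Theta(n^{1/(1+\alpha)})$. Since Theorem \ref{Th_Main_Theorem_Lower_Bound} only asserts an $\Omega$ bound, only the upper bound $G(x)\le c_2 x^{1+\alpha}$ (giving the matching lower bound on $G^{-1}$) is strictly needed, which slightly simplifies the argument.
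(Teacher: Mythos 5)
Your proposal is correct and follows exactly the paper's own route: verify condition set 1 via the hazard rate $xh(x)\to\alpha$, classify the tail as regularly varying with slowly varying $L(x)=(x/(1+x))^{\alpha}\to 1$, and invert $G(x)=x\left(1+\tfrac{\beta\mu}{2}x\right)^{\alpha}=\Theta(x^{1+\alpha})$ to get $G^{-1}(n)=\Theta\left(n^{1/(1+\alpha)}\right)$. The only difference is that you supply the computations the paper dismisses as ``easily checked'' and ``easy to observe,'' including the correct observation that only the upper bound on $G$ is needed for the $\Omega$ claim.
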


\begin{cor}[Achievable Throughput of Networks with Log-normal Power Distribution]\label{Cor_Lower_Log_Normal}
Consider the following channel power distribution
\begin{equation}\label{Eq_Cor_LogNormal_PDF}
    f(x)=\frac{1}{\sqrt{2\pi \sigma^2}}\frac{1}{x}e^{\frac{-(\log{x}-\mu)^2}{2\sigma ^ 2}},
\end{equation}
\begin{equation}\label{Eq_Cor_LogNormal_CDF}
    F(x)=\frac{1}{2}+\frac{1}{2} \rm{erf} (\frac{\log{\it x}-\mu}{\sqrt{2\sigma ^ 2}}),
\end{equation}
where $\sigma$ and $\mu$ are the parameters of the distribution. Then, the achievable throughput in this case will be
\begin{eqnarray}\label{Eq_Cor_LogNormal_Throughput}
	T=\Omega\left(e^{\sqrt{2}\sigma\sqrt{\log(n)}}\right).
\end{eqnarray}

\end{cor}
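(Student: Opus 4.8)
The plan is to invoke Theorem~\ref{Th_Main_Theorem_Lower_Bound} and thereby reduce the statement to two mechanical tasks: verifying that the Log-normal law meets the hypotheses (condition set~1 together with membership in the NHT/HT families), and then reading off the growth order of $G^{-1}(n)$ from the definition~(\ref{Eq_Main_Theorem_Lower_Bound_G_X}). Since the Log-normal moment generating function is infinite for every $t>0$, the law cannot be of NHT type, so I would place it in the HT family through the Log-normal-type tail~(\ref{Eq_Def_HT_RV_LogNormal_Tail}). Once the hypotheses are confirmed, the theorem immediately gives $T=\Omega(G^{-1}(n))$, and only an asymptotic inversion of $G$ remains.

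For the hypotheses, note first that the support of the density~(\ref{Eq_Cor_LogNormal_PDF}) is $(0,\infty)$, so $F^{-1}(1)\rightarrow\infty$ and~(\ref{Eq_Condition_Set1_1}) holds. To check~(\ref{Eq_Condition_Set1_2}) I would pass to the standard normal: with $u=(\log x-\mu)/\sigma$ one has $1-F(x)=1-\Phi(u)$ and $f(x)=\phi(u)/(\sigma x)$, whence $h(x)=\frac{\phi(u)}{\sigma x\,(1-\Phi(u))}$. The Mills-ratio asymptotic $(1-\Phi(u))/\phi(u)\sim 1/u$ then yields $1/h(x)\sim \sigma^2 x/(\log x-\mu)$, and differentiating this gives $\frac{\mathrm{d}}{\mathrm{d}x}\frac{1}{h(x)}\sim \sigma^2(\log x-\mu-1)/(\log x-\mu)^2\to 0$, so the second alternative in~(\ref{Eq_Condition_Set1_2}) is satisfied. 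The same Mills-ratio expansion produces $1-F(x)\sim c\,x^{\mu/\sigma^2}(\log x)^{-1}e^{-\log^2 x/(2\sigma^2)}$, whose dominant exponential matches the template~(\ref{Eq_Def_HT_RV_LogNormal_Tail}) with $\gamma=2>1$ and $\lambda=1/(2\sigma^2)>0$, the prefactor being slowly varying; hence the distribution is HT and Theorem~\ref{Th_Main_Theorem_Lower_Bound} applies.

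For the inversion I would substitute this tail estimate into $G(x)=x/(1-F(\kappa x))$, where $\kappa=\beta\,\mathbb{E}\{\gamma_{i,j}\}/2>0$ is a fixed constant and $\mu,\sigma$ continue to denote the distribution parameters. Writing $y=\kappa x$ so that $\log y=\log x+O(1)$, taking logarithms gives
\begin{equation}
\log G(x)=\log x+\frac{\log^2 y}{2\sigma^2}-\frac{\mu}{\sigma^2}\log y+\log\log y+O(1)=\frac{\log^2 x}{2\sigma^2}\bigl(1+O(1/\log x)\bigr),
\end{equation}
so the term $\log^2 x/(2\sigma^2)$ dominates every other contribution. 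Setting $\log G(x)=\log n$ yields $\log^2 x=2\sigma^2\log n\,(1+o(1))$, i.e.\ $\log x=\sqrt{2}\,\sigma\sqrt{\log n}+O(1)$, and exponentiating gives $G^{-1}(n)=\Theta\bigl(e^{\sqrt{2}\,\sigma\sqrt{\log n}}\bigr)$. Theorem~\ref{Th_Main_Theorem_Lower_Bound} then delivers~(\ref{Eq_Cor_LogNormal_Throughput}).

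The main obstacle I anticipate is the hypothesis check rather than the inversion, with two delicate points. First, establishing $\frac{\mathrm{d}}{\mathrm{d}x}\frac{1}{h(x)}\to0$ rigorously: an asymptotic equivalence for $1/h$ does not by itself control its derivative, so I would retain the next-order term in the Mills-ratio expansion and differentiate the genuine expression for $1/h$ rather than its leading approximation. Second, reconciling the spurious $(\log x)^{-1}$ factor with the exact form of~(\ref{Eq_Def_HT_RV_LogNormal_Tail}); since it is slowly varying it does not perturb the dominant $e^{-\log^2 x/(2\sigma^2)}$ decay, and I would argue that the large-deviations step underlying Theorem~\ref{Th_Main_Theorem_Lower_Bound} is insensitive to such a factor. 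The inversion itself is routine once one observes that every correction to $\log G(x)$ beyond $\log^2 x/(2\sigma^2)$ is $O(\log x)$, hence shifts $\log x$ by only $O(1)$ and changes $G^{-1}(n)$ by at most a constant multiplicative factor, which is harmless for the claimed $\Omega$ scaling.
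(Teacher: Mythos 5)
Your proposal is correct and follows essentially the same route as the paper's proof: check the hypotheses of Theorem~\ref{Th_Main_Theorem_Lower_Bound}, use the Gaussian-tail (Mills-ratio) asymptotic $1-F(x) \sim \frac{\sigma}{\sqrt{2\pi}}\frac{e^{-(\log x -\mu)^2/2\sigma^2}}{\log x -\mu}$, and invert $G$ asymptotically to obtain $G^{-1}(x)=\Theta\bigl(e^{\sqrt{2}\sigma\sqrt{\log x}}\bigr)$. If anything you are more careful than the paper, which delegates the condition-set-1 check to a citation, asserts HT membership without addressing the $(\log x)^{-1}$ prefactor you correctly flag as falling outside the literal template~(\ref{Eq_Def_HT_RV_LogNormal_Tail}), and states the sandwich bounds for the inversion without the explicit logarithmic bookkeeping you supply.
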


It is important to note a few issues that can be deduced from the corollaries. First, considering Corollary \ref{Cor_Lower_Pareto}, the throughput for the Pareto distribution depends on the parameter $\alpha$, which determines how fast $1-F(x)$ decays, as $x$ gets large. In fact, for smaller values of $\alpha$ we have a heavier tail. Similarly, from (\ref{Eq_Cor_Pareto_Throughput}) we see that for smaller values of $\alpha$, higher throughput can be achieved, which leads to have higher throughput for heavier tails. The same trend is noted for the Weibull distribution in Corollary \ref{Cor_Lower_Weibull}. By increasing the parameter $k$, we will suppress the tail of $1-F(x)$, while, the throughput will also decrease. We can explain this phenomenon with the help of concepts introduced through the proof sketch where we demonstrated that the most important factor determining the throughput for a distribution is the amount of multiuser diversity gain it can provide. In other words, the distributions that have a heavier tail will result in a larger value for $t_1^*$ (this is illustrated in Fig \ref{Fig_Proof_Sketch}). That is due to the fact that in distributions with heavier tails, we have a higher chance of encountering high power channels, which will provide the chance to benefit from them in the multiuser diversity context\footnote{It should be noted we are interested in the intermediate order statistic behavior of the underlying distribution, not the extreme order statistics. However, our intuition says that a distribution with high extreme order statistics will naturally yield a high intermediate order statistics as well.}. If we constrain the distribution to have a finite channel power, then the heaviest tail will belong to the Pareto distribution with $\alpha=2+\epsilon$ (for any small strictly positive $\epsilon$)\footnote{This is in agreement with the result in \cite{Pooya}.}. In the Log-normal distribution, the parameter $\sigma$ determines the shaping of tail of the distribution, and thus, determining the multiuser diversity gain. In the Nakagami-$m$ fading, although $m$ determines how heavy the tail is, it is not effective-enough to increase (or decrease) the multiuser diversity gain order. Thus, the throughput of Nakagami-$m$ fading is of order $\log(n)$, regardless of value of $m$.

The above corollaries characterize the throughput scaling of wireless networks with different link power distributions\footnote{Specifically, Corollary \ref{Cor_Lower_Nakagami}, when considered in the special case of $m=1$, covers the results in \cite{Ebrahimi} and \cite{Ebrahimi_2011}, Corollary \ref{Cor_Lower_Log_Normal} covers the result in \cite{Ebrahimi_report}, and Corollary \ref{Cor_Lower_Pareto} covers the result in \cite{Pooya} in a stronger probabilistic sense. The result in Corollary \ref{Cor_Lower_Weibull} for Weibull distribution has not appeared in previous papers.}, based on the scheme provided in Theorem \ref{Th_Main_Theorem_Lower_Bound}. An important question still to be addressed is as follows. In the scheme provided by Theorem \ref{Th_Main_Theorem_Lower_Bound} to achieve the throughput stated in (\ref{Eq_Main_Theorem_Lower_Bound_General_Throughput}), we have just used the information about the status of direct links. However, in the optimum scheme, one should exploit the information regarding the status of cross links as well. Thus, we have to characterize how much throughput is lost due to this simplification in the scheme. In the next section, we will address this issue by providing throughput upper bound for all possible activation vectors, and will compare it with the current achievable throughput.

\section{Upper Bound on the Throughput }\label{Sec_Upper}
In the previous section, we have established lower bound results for the throughput of the network. In this section, we present an upper bound on the throughput in the case of NHT type (super-exponential) distributions in Theorem \ref{Th_Upper_NHT}.

\begin{thm}\label{Th_Upper_NHT}
	Throughput of one-hop schemes in wireless networks with link power distribution of NHT (super-exponential) type  is upper bounded by $\mbox{O}\left(\log(n)\right)$.
\end{thm}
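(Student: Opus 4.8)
The plan is to show that, with high probability, \emph{no} activation vector can make more than a constant multiple of $\log(n)$ receivers successful simultaneously, and then translate this into the claimed $\mbox{O}(\log(n))$ bound. I would begin with two structural reductions. First, observe that if a set of index pairs $\mathcal{W}$ is simultaneously successful under some active set $\mathbb{S}\supseteq\mathcal{W}$, then it is already simultaneously successful when we activate exactly $\mathbb{S}=\mathcal{W}$, since adjoining extra transmitters can only increase the interference in (\ref{Eq_Model_SINR_Success_Based_on_Activation_Vector}); moreover, any subset $\mathcal{W}'\subseteq\mathcal{W}$ remains simultaneously successful (removing interferers only helps the survivors). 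Hence ``simultaneous success'' is downward closed, and the event $\{T^\ast\ge T\}$ is contained in the event that \emph{some} $\mathcal{W}$ with $|\mathcal{W}|=T$ is simultaneously successful under $\mathbb{S}=\mathcal{W}$. Writing $q_T$ for the probability that a \emph{fixed} $T$-subset is simultaneously successful, a union bound gives $\Pr\{T^\ast\ge T\}\le {n\choose T}\,q_T$. It therefore suffices to bound $q_T$ for a fixed $\mathcal{W}$.

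For the fixed set $\mathcal{W}$, I would extract a clean necessary condition by summing the success inequalities $\gamma_{i,i}>\beta\big(N_0+\sum_{k\in\mathcal{W},\,k\neq i}\gamma_{k,i}\big)$ over $i\in\mathcal{W}$ and discarding the nonnegative noise terms, yielding
\[
  q_T \;\le\; \Pr\left\{\sum_{i\in\mathcal{W}}\gamma_{i,i} \;>\; \beta\!\!\sum_{i,k\in\mathcal{W},\,i\neq k}\!\!\gamma_{k,i}\right\}.
\]
The crucial point is that the left sum $U$ is built from the $T$ \emph{direct} links while the right sum $V$ is built from the $T(T-1)$ distinct \emph{cross} links; under the i.i.d. model these two families are independent, $\mathbb{E}\{U\}=\mu T$ and $\mathbb{E}\{V\}=\mu T(T-1)$. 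Splitting at the midpoint of $\mathbb{E}\{V\}$ then bounds $q_T$ by $\Pr\{U>\tfrac{1}{2}\beta\mu T(T-1)\}+\Pr\{V<\tfrac{1}{2}\mu T(T-1)\}$.

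I would control both terms by Chernoff/large-deviation estimates under the NHT hypothesis (Cram\'er's condition of Definition \ref{Def_Condition_NHT_RV}), invoking the same machinery as Lemmas \ref{Lemma_Lower_LDP_NHT} and \ref{Lemma_Lower_LDP_HT}. For $U$ the threshold is of order $T^2$, so the required per-term average grows linearly in $T$; a fixed exponential tilt $\theta$ inside the radius of convergence gives $\Pr\{U>\tfrac{1}{2}\beta\mu T(T-1)\}\le e^{-c T^2}$ once $T$ is large. For $V$, a sum of $\Theta(T^2)$ nonnegative i.i.d.\ terms falling below half its mean is a lower-tail deviation, again decaying like $e^{-c T^2}$. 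Hence $q_T\le e^{-cT^2}$. Combining with the union bound, ${n\choose T}q_T\le n^T e^{-cT^2}=e^{T\log(n)-cT^2}\to 0$ whenever $T=C\log(n)$ with $C>1/c$, which gives $T^\ast=\mbox{O}(\log(n))$ w.h.p., and the theorem follows.

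The main obstacle is exactly the \emph{adversarial, channel-dependent} choice of the activation vector: because the optimizer may pick the most favorable subset after seeing all the $\gamma_{i,j}$, a per-receiver or per-pair argument is not enough, and one is forced into a union bound over the $\sim{n\choose T}\approx n^T$ candidate sets. To survive this, the per-subset failure probability must decay \emph{super-linearly} in $T$, of the form $e^{-\Omega(T^2)}$. The reason such a rate is available at all is the $T(T-1)$ mutually interfering cross links, which inject a quadratic number of independent summands into $V$; verifying rigorously, via Cram\'er's condition, that both the rare over-large direct sum and the rare under-small interference sum indeed concentrate at the $T^2$ scale (and that the NHT tail is what forbids a single giant direct link from rescuing the set) is where the real care is needed.
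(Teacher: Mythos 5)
Your proposal is correct, and it shares the paper's overall skeleton (Appendix~C): restrict to candidate sets that are active and fully successful, take a union bound over the roughly ${n\choose T}\le (ne/T)^T$ subsets (the paper's Lemma~\ref{Lemma_Upper_Bound_Proof_Choose}), establish a per-subset failure probability of $e^{-\Omega(T^2)}$, and set $T=\Theta(\log n)$. Where you genuinely diverge is in how that quadratic rate is produced. The paper factorizes the per-set probability as $\left(\Pr\{A\}\right)^T$, exploiting the fact that distinct receivers' success events in a fixed active set are independent (each receiver's SINR involves a disjoint family of i.i.d.\ links); it then bounds the single-receiver probability by $c_5e^{-c_6T}$ via the splitting Lemma~\ref{Lemma_Upper_Bound_Proof_ABC} at threshold $\mu T/2$, the NHT tail of one direct link, and the lower-tail estimate of Lemma~\ref{Lemma_Upper_Bound_Proof_LDP_NHT}, so the $e^{-\Omega(T^2)}$ rate comes from raising $e^{-\Omega(T)}$ to the power $T$. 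You instead sum the $T$ success inequalities and compare the sum $U$ of $T$ direct links against the sum $V$ of $T(T-1)$ cross links, getting $e^{-\Omega(T^2)}$ in one shot: a Chernoff bound with a fixed tilt for $\Pr\{U>\Theta(T^2)\}$ (this is where Cram\'er's condition enters, and where a heavy tail would break the argument) and a lower-tail bound for $\Pr\{V<\tfrac{1}{2}\mathbb{E}\{V\}\}$, which holds for any nonnegative distribution. Your route buys two things: it does not rely on the independence-across-receivers factorization, which the paper uses silently when writing $\mathbb{E}\{X(t)\}={n\choose t}\left(\Pr\{A\}\right)^t$, and your explicit downward-closedness observation (extra transmitters only increase interference, so throughput $\ge T$ forces some $T$-subset to be valid on its own) supplies a reduction the paper also needs but leaves implicit. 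The paper's route buys modularity: the per-receiver estimate isolates the NHT hypothesis in a scalar tail bound and mirrors the structure of the lower-bound analysis in Theorem~\ref{Th_Main_Theorem_Lower_Bound}, whereas your aggregate comparison discards per-receiver structure -- harmlessly, since the $T(T-1)$ cross links still deliver the quadratic exponent.
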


\begin{proof}[Proof of Theorem \ref{Th_Upper_NHT}]
See Appendix C.
\end{proof}

According to Theorem \ref{Th_Upper_NHT}, there exists no strategy for activating the source nodes which results in a throughput of order more than $\log(n)$.

We can now present the following corollary for the throughput scaling of networks with Nakagami-$m$ fading:

\begin{cor}\label{Th_Nakagami}
	Throughput of one-hop schemes in wireless networks with Gamma link power distribution (Nakagami-$m$ fading) is of order $\Theta\left(\log(n)\right)$.
\end{cor}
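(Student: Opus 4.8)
The plan is to prove the matching $\Theta(\log(n))$ bound by a simple sandwiching argument, combining the lower bound already established in Corollary~\ref{Cor_Lower_Nakagami} with the upper bound of Theorem~\ref{Th_Upper_NHT}. No new machinery is required; the only genuine step is verifying that the Gamma power distribution is of the NHT (super-exponential) type in the sense of Definition~\ref{Def_Condition_NHT_RV}, so that Theorem~\ref{Th_Upper_NHT} is applicable to it. The $\Omega(\log(n))$ half of the claim is handled immediately by quoting Corollary~\ref{Cor_Lower_Nakagami}, which already shows that the heuristic strategy of activating the pairs with the strongest direct links achieves $T=\Omega(\log(n))$ w.h.p., for all values of the Gamma parameters $m$ and $\Omega$.

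For the upper bound, I would first check Cram\'{e}r's condition for the Gamma distribution with p.d.f.~(\ref{Eq_Cor_Nakagami_PDF}). Evaluating the moment generating function yields the standard closed form
\begin{equation}
\mathbb{E}\{e^{tX}\}=\left(1-\frac{t\Omega}{m}\right)^{-m},
\end{equation}
which is finite for every $t\in(0,m/\Omega)$. Hence $\mathbb{E}\{e^{tX}\}<\infty$ for some $t>0$, so the Gamma distribution satisfies Definition~\ref{Def_Condition_NHT_RV} and is of NHT type. Theorem~\ref{Th_Upper_NHT} then applies verbatim and gives $T=\mathrm{O}(\log(n))$ for \emph{every} activation strategy, not merely the heuristic one.

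Combining the two halves, $T=\Omega(\log(n))$ and $T=\mathrm{O}(\log(n))$ together give $T=\Theta(\log(n))$ by the definition of $\Theta(\cdot)$ recalled in Section~\ref{Sec_Introduction}, completing the proof. The main obstacle is essentially the MGF verification above, and I do not anticipate any real difficulty with it: the Gamma tail decays exponentially fast, so finiteness of the MGF on an interval around the origin is routine, and once it is confirmed the $\Theta$ result is immediate. The conceptual content of the corollary lies entirely in the prior establishment of the two matching bounds; this statement merely records that they coincide for Nakagami-$m$ fading, which in turn certifies the order-optimality of the low-complexity heuristic scheme in this case.
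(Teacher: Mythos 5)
Your proposal is correct and follows essentially the same route as the paper: it sandwiches the lower bound from Corollary~\ref{Cor_Lower_Nakagami} against the upper bound of Theorem~\ref{Th_Upper_NHT}, with the only substantive check being that Gamma satisfies Cram\'{e}r's condition (the paper performs the identical MGF verification, citing $\mathbb{E}\{e^{tX}\}=(1-\Omega t/m)^{-m}$, within its proof of Corollary~\ref{Cor_Lower_Nakagami}). Nothing is missing.
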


\begin{proof}[Proof of Corollary \ref{Th_Nakagami}]
Gamma is a super-exponential distribution. According to Theorem \ref{Th_Upper_NHT}, the throughput of all activation strategies is upper bounded by order of $\log(n)$. On the other hand, we have shown in Corollary \ref{Cor_Lower_Nakagami} that there exists an activation strategy that achieves the throughput of the same order for networks with Nakagami-$m$ fading. Since the upper and lower bounds meet, we conclude that the throughput is of order $\Theta\left(\log(n)\right)$.
\end{proof}

Thus, we conclude that activating the pairs with strongest direct links -- which ignores information of interference links -- is order-optimal for Nakagami fading. Meanwhile, we observe that for the Weibull distribution, upper bound of throughput is of order $\log(n)$ according to Theorem \ref{Th_Upper_NHT}. However, according to Corollary \ref{Cor_Lower_Weibull} the lower bound we have derived in this paper for Weibull distribution is of order $\left(\log(n)\right)^ {1/k}$. This means that we do not yet know whether the scheme (used in the proof of Theorem \ref{Th_Main_Theorem_Lower_Bound}) is order-optimal for the Weibull distribution or not.

Also, Theorem \ref{Th_Upper_NHT} has an interesting algorithmic implication for super-exponential distributions. Generally, in order to find the optimum activation vector for a generic distribution one needs to do $2^n$ search trials on all activation vectors. However, for super-exponential distributions, we know that no more than $\log(n)$ nodes can be active in the optimum activation vector. This means that we should search over activation vectors whose weight (by definition, weight of a binary vector is the number of $1$'s in that vector) is less than $\log_2(n)$. Thus, in this case we need the following number of search steps\footnote{Changing the logarithm base from $e$ to $2$ is just for presentation simplicity, and does not has any effect on scaling.}:
\begin{eqnarray}\label{Eq_Upper_Algorithmic}
s&=&\sum_{i=1}^{\log_2(n)}{{{n}\choose{i}}} \\ \nonumber
&<&\log_2(n){{n}\choose{\log_2(n)}} \\ \nonumber
&<&\log_2(n) n^{\log_2(n)} \\ \nonumber
&=&\log_2(n) 2^{\log_2^2(n)}
\end{eqnarray}
which is significantly less than the original $2^n$ number of search trials for a generic distribution. Also, finding upper bounds for other distributions (other than super-exponential) will be of the same algorithmic importance as well.

\section{Conclusion and Future Work}\label{Sec_Conclusion}

In order to understand fundamental limits of wireless networks and design corresponding transmission strategies, we have proposed lower and upper bounds for network throughput of one-hop networks. Our main lower bound theorem proposes a closed-form throughput expression for a network with general fading. In addition to reducing to previous results (Rayleigh fading, Log-normal power distribution, and Pareto distribution, in \cite{Ebrahimi}, \cite{Ebrahimi_report}, and \cite{Pooya}, respectively) as its special cases, this theorem enables us to also derive throughput for new distributions (Nakagami-$m$ fading and Weibull power distributions). More importantly, our analysis approach explains the mechanism behind the heuristic method, and the main phenomenon affecting its performance. In fact, the multiuser diversity gain, which we have characterized in the context of intermediate order statistics, determines the throughput of the network; the heavier the power distribution tail is, the higher the throughput will be.

Moreover, we have proposed a throughput upper bound of order $\log(n)$ for super-exponential power distributions. Based on this upper bound, we have proved that the heuristic scheme is order-optimal for Nakagami-$m$ fading. It means that, to decide about the transmission strategy we do not need to use information regarding cross links, for networks with Nakagami-$m$ fading. However, the throughput upper bound of order $\log(n)$ does not coincide with the achievable throughput of order $\log^{1/k}(n)$ for the case of Weibull distribution. This leaves the problem of finding order optimal schemes for networks with Weibull channel distribution open. In addition, there is still need for proposing throughput upper bounds for sub-exponential distributions (e.g., Pareto and Log-normal distribution) to better understand the behavior of heuristic methods in such cases.

Two important practical issues remain untouched, which are interesting topics for future works:

\subsection{Channel State Information}
One important practical issue regarding the scheme achieving the mentioned throughput in Theorem \ref{Th_Main_Theorem_Lower_Bound} is the amount of the Channel State Information (CSI) it requires. In the scheme used in Theorem \ref{Th_Main_Theorem_Lower_Bound}, each destination should decide whether or not its direct link with the corresponding source is among the $t_1$ best direct links. Then, for a positive answer, the destination instructs the corresponding source to become active at that time slot. Thus, the destination should know the power of the direct link for all transmission pairs (i.e. $\gamma_{i,i}, i=1,\dots,n$) in order to perform the sorting process. Such approach requires some means  of communication between destinations in order to share CSI. As an alternative method, if the direct link power of each destination is above a carefully-designed threshold level, then the corresponding source should become active (similar to the idea used in \cite{Ebrahimi}). Therefore, in this alternative approach, destinations do not need to know the power of direct links of other transmissions, relaxing the need for inter-destination CSI exchange. A rigorous study of this idea, or other schemes relaxing the need for substantial amount of CSI, is an interesting topic for future work.

\subsection{Delay and Fairness}
Since the method used in Theorem \ref{Th_Main_Theorem_Lower_Bound} is benefiting from the multiuser diversity gain, some price in terms of delay should be paid. If we define the delay to be the number of time slots it takes for a node to become active, then it is easy to note that the average delay for any scheme with the achievable throughput of order $T(n)$ will be of order $n/T(n)$ for each node\footnote{It is important to note that this is not a throughput-delay trade-off characterization. This is due to the fact that throughput and delay results are derived for the case of maximum throughput without taking delay performance into account.}. Also, it should be noted that since the nodes are statistically identical in terms of channel powers, each one will have the same share of total throughput in an extended time span, ensuring the fairness of the scheme. A rigorous study of probabilistic behavior of the delay of heuristic scheme, and considering fairness in a limited time window are other interesting topics for further research.

\newpage

\section*{Appendix A:  Proof of Theorem \ref{Th_Main_Theorem_Lower_Bound}}\label{App1}

Before proving the theorem we need some lemmas. The first lemma comes from the ``Order Statistics'' context and is an assertion about ``Intermediate Order Statistics''.

\begin{lem} [Falk, 1989] \label{Lemma_Lower_Falk}
Assume that $X_1,X_2,\dots,X_n$ are i.i.d. random variables with the p.d.f. $f(x)$. Assume that $F(x)$ (the corresponding c.d.f.) is absolutely continuous and satisfies the condition set 1. Define $X_{(1)},X_{(2)}, \dots, X_{(n)}$ to be the order statistics of $X_1,X_2,\dots,X_n$. If $i \rightarrow \infty$ and $i/n \rightarrow 0$ as $n \rightarrow \infty$, then there exist sequences $a_n$ and $b_n>0$ such that
\begin{equation}\label{Eq_Lemma_Falk_1}
	\frac{X_{(n-i+1)}-a_n}{b_n} \Rightarrow N(0,1),
\end{equation}
where $\Rightarrow$ denotes convergence in distribution, and $N(0,1)$ is the Normal distribution with zero mean and unit variance. Furthermore, one choice for $a_n$ and $b_n$ is:
\begin{eqnarray}\label{Eq_Lemma_Falk_2}
	a_n=F^{-1}(1-\frac{i}{n}), \;\;\;	b_n=\frac{\sqrt{i}}{nf(a_n)}.
\end{eqnarray}
\end{lem}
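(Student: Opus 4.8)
The plan is to reduce the statement to the asymptotic normality of a single uniform intermediate order statistic, and then transport that limit back through the quantile function $F^{-1}$. First I would apply the probability integral transform: if $U_{(1)} \le \dots \le U_{(n)}$ are the order statistics of an i.i.d. uniform$[0,1]$ sample, then $X_{(n-i+1)} \stackrel{d}{=} F^{-1}\!\left(U_{(n-i+1)}\right)$, so it suffices to analyze $U_{(n-i+1)}$ and then compose with $F^{-1}$. The scaling constants are read off from this composition, since the natural centering of $U_{(n-i+1)}$ is $1-i/n$, whose image under $F^{-1}$ is precisely $a_n=F^{-1}(1-i/n)$.

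Next I would establish the central limit behavior of the uniform order statistic. Writing $V := 1 - U_{(n-i+1)}$, this is distributed as the $i$-th smallest of the uniform sample, and by the R\'enyi representation it equals $S_i/S_{n+1}$ in distribution, where $S_k = E_1 + \dots + E_k$ are partial sums of i.i.d. standard exponentials. Since $i \to \infty$, the ordinary central limit theorem gives $(S_i - i)/\sqrt{i} \Rightarrow N(0,1)$, while the law of large numbers gives $S_{n+1}/n \to 1$ in probability; because $i/n \to 0$, Slutsky's theorem then yields
\[
\frac{n}{\sqrt{i}}\left(U_{(n-i+1)} - \left(1 - \frac{i}{n}\right)\right) \Rightarrow N(0,1).
\]

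Finally I would linearize $F^{-1}$ about the point $1 - i/n$. Recalling $\left(F^{-1}\right)'(p) = 1/f\!\left(F^{-1}(p)\right)$, a first-order Taylor expansion gives
\[
X_{(n-i+1)} - a_n = F^{-1}\!\left(U_{(n-i+1)}\right) - F^{-1}\!\left(1 - \frac{i}{n}\right) \approx \frac{1}{f(a_n)}\left(U_{(n-i+1)} - \left(1 - \frac{i}{n}\right)\right),
\]
so dividing by $b_n = \sqrt{i}/(n f(a_n))$ and invoking Slutsky's theorem reduces the claim to the uniform limit of the previous paragraph. The main obstacle is controlling the Taylor remainder: since $i/n \to 0$ while $F^{-1}(1) \to \infty$, the expansion point lies in the far tail where $F^{-1}$ is steep, and I must show that $f\!\left(F^{-1}(q)\right)/f(a_n) \to 1$ uniformly over the random argument $q$ lying within $O(\sqrt{i}/n)$ of $1 - i/n$. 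This is exactly where condition set 1 enters: the hazard-rate regularity $\lim_{x\to\infty} x h(x) = c_0$ or $\lim_{x\to\infty}\frac{d}{dx}\frac{1}{h(x)} = 0$ is a von Mises--type condition forcing the tail quantile function to vary slowly enough that this ratio becomes asymptotically negligible on the relevant shrinking neighborhood, legitimizing the linearization in probability and completing the proof via Slutsky's theorem.
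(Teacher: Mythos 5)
The paper never proves this lemma: its ``proof'' in Appendix A is a pointer to Falk (1989) and to Arnold et al., Ch.~8. So any actual argument you give is necessarily a different route, and the one you chose is the classical textbook one: probability integral transform, R\'enyi representation $U_{(i)} \stackrel{d}{=} S_i/S_{n+1}$, CLT for $S_i$ plus LLN for $S_{n+1}$ plus Slutsky, then a delta-method linearization through $F^{-1}$. Your uniform-order-statistic step is fully correct --- the error terms you suppress are of order $\sqrt{i}/\sqrt{n}$ and $\sqrt{i}/n$, both vanishing since $i/n \to 0$. For comparison, Falk's own proof is more direct: it writes $\Pr\{X_{(n-i+1)} \le a_n + b_n x\}$ as a binomial probability $\Pr\bigl\{\mathrm{Bin}\bigl(n, 1-F(a_n+b_n x)\bigr) < i\bigr\}$, applies the normal approximation to the binomial, and uses the von Mises conditions to show $n\bigl(1-F(a_n+b_n x)\bigr) = i - x\sqrt{i}\,(1+o(1))$; that route is what delivers the \emph{uniform} (Kolmogorov-distance) normality in Falk's title, whereas your route gives pointwise convergence in distribution, which is all the paper's Theorem~1 actually uses.

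The one genuine shortfall is the step you yourself flag: you assert, but do not prove, that condition set 1 forces $f\bigl(F^{-1}(q)\bigr)/f(a_n) \to 1$ uniformly over $q$ within $O(\sqrt{i}/n)$ of $1-i/n$ (incidentally, it is this ratio that tends to one and the Taylor \emph{remainder} that becomes negligible --- not the ratio itself). That implication is the entire mathematical content of the lemma; everything else in your sketch is routine. To close it you must treat the two alternatives in (12) separately. If $x h(x) \to c_0 > 0$, then $1-F$ is regularly varying of index $-c_0$ (Karamata representation of $\log(1-F) = -\int h$), hence $f = h\cdot(1-F)$ is regularly varying of index $-c_0-1$ and $F^{-1}(1-u)$ is regularly varying in $1/u$, so asymptotically equivalent quantile arguments give ratio $\to 1$. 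If instead $\frac{d}{dx}\frac{1}{h(x)} \to 0$, set $g = 1/h$; this condition makes $g$ self-neglecting, i.e.\ $g\bigl(x + o(g(x))\bigr)/g(x) \to 1$, and since $F^{-1}(1-u') - F^{-1}(1-u) \approx g(a_n)\log(u/u') = o\bigl(g(a_n)\bigr)$ when $u'/u \to 1$, the identity $f\bigl(F^{-1}(1-u)\bigr) = u\,h\bigl(F^{-1}(1-u)\bigr)$ reduces the claim to the self-neglecting property. Finally, because the argument $q$ is random, you need tightness of $n\bigl(U_{(n-i+1)}-(1-i/n)\bigr)/\sqrt{i}$ (which your uniform CLT supplies) to confine $q$ to such a deterministic shrinking window with probability $\to 1$ before invoking the uniform convergence. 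With those details written out, your proof is complete and correct.
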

\begin{proof}[Proof of Lemma \ref{Lemma_Lower_Falk}]
The proof of Lemma \ref{Lemma_Lower_Falk} can be found in \cite{Falk} (see also \cite{Arnold}, Ch. 8).
\end{proof}

In summary, Lemma \ref{Lemma_Lower_Falk} states that the random variable $X_{(n-i+1)}$ is asymptotically a standard Normal random variable, after being normalized by the sequences $a_n$ and $b_n$. Also, we need the following Lemma which is closely related to the previous one:

\begin{lem}\label{Lemma_Lower_an_bn}
	In Lemma \ref{Lemma_Lower_Falk} we have
	\begin{equation}\label{Eq_Lemma_an_bn}
		\lim_{n \rightarrow \infty} \frac{a_n}{b_n}=\infty,
	\end{equation}
    where $a_n$ and $b_n$ are defined in (\ref{Eq_Lemma_Falk_2}).
\end{lem}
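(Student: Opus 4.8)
The plan is to reduce the claim to a statement about the hazard rate $h(x)=f(x)/(1-F(x))$ appearing in condition set 1 (Definition \ref{Def_Condition_Set1}), and then dispatch the two alternatives in (\ref{Eq_Condition_Set1_2}) one at a time. First I would substitute the explicit normalizing sequences from (\ref{Eq_Lemma_Falk_2}). Since $a_n=F^{-1}(1-i/n)$ we have $1-F(a_n)=i/n$, and therefore $f(a_n)=h(a_n)\,(1-F(a_n))=(i/n)\,h(a_n)$. Inserting this into $b_n=\sqrt{i}/(n f(a_n))$ and simplifying yields
\[
\frac{a_n}{b_n}=\frac{n f(a_n)\,a_n}{\sqrt{i}}=\sqrt{i}\,a_n h(a_n).
\]
Because $i\to\infty$, the factor $\sqrt{i}$ already diverges, so it suffices to show that $a_n h(a_n)$ stays bounded away from zero; in fact the two cases of condition set 1 will give either $a_n h(a_n)\to c_0>0$ or $a_n h(a_n)\to\infty$.

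Before splitting into cases, I would record that $a_n\to\infty$. Indeed, $i/n\to 0$ forces $1-i/n\to 1$, and condition (\ref{Eq_Condition_Set1_1}) states $F^{-1}(1)\to\infty$, so $a_n=F^{-1}(1-i/n)\to\infty$. In the first alternative of (\ref{Eq_Condition_Set1_2}), namely $\lim_{x\to\infty}x h(x)=c_0>0$, evaluating along $x=a_n\to\infty$ gives $a_n h(a_n)\to c_0$. Combined with the reduction above, $a_n/b_n=\sqrt{i}\,a_n h(a_n)\to\infty$, which settles this case.

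For the second alternative, $\lim_{x\to\infty}\frac{d}{dx}\frac{1}{h(x)}=0$, I would set $g(x)=1/h(x)$ and show that $g'(x)\to 0$ forces $g(x)/x\to 0$: fixing $\varepsilon>0$, choose $M$ with $|g'(x)|<\varepsilon$ for $x>M$, integrate to obtain $g(x)\le g(M)+\varepsilon(x-M)$, and let $x\to\infty$ so that $\limsup_{x\to\infty}g(x)/x\le\varepsilon$; since $\varepsilon$ is arbitrary, $g(x)/x\to 0$. Equivalently $a_n h(a_n)=a_n/g(a_n)\to\infty$, and again $a_n/b_n\to\infty$. The main obstacle is precisely this last real-analysis step in Case 2: one must check that $g$ is eventually differentiable so the integral bound applies, that $a_n$ lies in the range where the limiting behaviour of $g'$ has set in (guaranteed by $a_n\to\infty$), and that $h(a_n)>0$ so all the reciprocals are well defined. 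The algebraic reduction of the first paragraph and the divergence $a_n\to\infty$ are routine; the delicate part is turning the hypothesis on the derivative of $1/h$ into the sublinear growth $g(x)=o(x)$.
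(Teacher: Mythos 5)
Your proposal is correct, and its skeleton is identical to the paper's: the same algebraic reduction $a_n/b_n=\sqrt{i}\,a_nh(a_n)$ (via $1-F(a_n)=i/n$), the same observation that $a_n\to\infty$, and the same split along the two alternatives of condition set 1, with Case 1 disposed of identically. The genuine difference is in Case 2. The paper writes $\lim a_n h(a_n)\sqrt{i} \geq \lim_{x\to\infty} x\big/\tfrac{1}{h(x)}$ and invokes L'H\^opital's rule to get $\lim 1\big/\bigl(\tfrac{1}{h}\bigr)'(x)=\infty$. You instead prove directly, by a mean-value/integration estimate, that $g'(x)\to 0$ forces $g(x)=o(x)$ for $g=1/h$, hence $xh(x)\to\infty$. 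Your route is slightly longer but strictly more careful: L'H\^opital's rule in the $\infty/\infty$ form requires the limit of the ratio of derivatives to exist in the extended reals, and here $1\big/\bigl(\tfrac{1}{h}\bigr)'$ need not have a limit if $\bigl(\tfrac{1}{h}\bigr)'$ tends to zero while changing sign (and the rule also presumes the denominator $1/h$ diverges, which the paper does not check). Your $\varepsilon$-argument is immune to both issues: it uses only that $g$ is eventually differentiable with $g'\to 0$ and that $g>0$, exactly the hypotheses available, and it yields the same conclusion $\limsup_{x\to\infty} g(x)/x\le\varepsilon$ for every $\varepsilon>0$. So you have not merely reproduced the paper's proof; you have repaired the one step in it that was stated loosely.
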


\begin{proof}[Proof of Lemma \ref{Lemma_Lower_an_bn}]
	\begin{eqnarray}\label{Eq_Proof_Lemma_an_bn_1}
		\frac{a_n}{b_n}&=&n\frac{a_nf(a_n)}{\sqrt{i}} \\ \nonumber
							&=&\frac{i}{1-F(a_n)}\frac{a_nf(a_n)}{\sqrt{i}} \\ \nonumber
							&=&\sqrt{i}a_nh(a_n),
	\end{eqnarray}
where $h(x)=\frac{f(x)}{1-F(x)}$.

If we have the condition $\lim_{x \rightarrow \infty} x h(x) =c_0>0$, then it is clear that
	\begin{equation}\label{Eq_Proof_Lemma_an_bn_2}
		\lim_{n \rightarrow \infty} \frac{a_n}{b_n}=\infty.
	\end{equation}
If we have the other condition $\lim_{x \rightarrow \infty} \frac{d}{dx}\frac{1}{h(x)} =0$, we will have
\begin{eqnarray}\label{Eq_Proof_Lemma_an_bn_3}
	\lim_{n \rightarrow \infty} \frac{a_n}{b_n}&=&\lim_{n \rightarrow \infty} \sqrt{i}a_nh(a_n) \\ \nonumber
                        &\geq&\lim_{a_n \rightarrow \infty} \frac{a_n}{\frac{1}{h(a_n)}} \\ \nonumber
						&\stackrel{L'H\hat{o}pital }{=}& \lim_{a_n \rightarrow \infty} \frac{1}{\left(\frac{1}{h(a_n)}\right)'} \\ \nonumber
						&=& \infty,
\end{eqnarray}
where we have used the $L'H\hat{o}pital$'s rule.
\end{proof}

The next lemma is a simple result of the large deviations theory for distributions of NHT type:

\begin{lem}\label{Lemma_Lower_LDP_NHT}
Assume that $X_1,X_2,\dots,X_n$ are i.i.d. random variables of NHT type (as defined in definition \ref{Def_Condition_NHT_RV}). Then, there exists $n_0$ such that for all $n>n_0$ we have
\begin{equation}\label{Eq_Lemma_LDP_NHT}
	\Pr \left\{ \frac{X_1+\dots+X_{n}}{n} \geq K\mu \right\} < c_1 e^{-c_2n},
\end{equation}
for some strictly positive constants $c_1$ and $c_2$. Also $K>1$, and $\mathbb{E}\{X_1\}=\mu$ .
\end{lem}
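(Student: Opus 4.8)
The plan is to prove this by a standard Chernoff (exponential Markov) bound, exploiting Cramér's condition, which is precisely the defining property of the NHT type in Definition \ref{Def_Condition_NHT_RV}. First I would introduce the moment generating function $M(t) \triangleq \mathbb{E}\{e^{t X_1}\}$ and record two facts about its domain of finiteness. Since channel powers are nonnegative, $e^{t X_1} \leq 1$ for every $t \leq 0$, so $M(t) \leq 1 < \infty$ on $(-\infty, 0]$; meanwhile the NHT hypothesis supplies some $t_0 > 0$ with $M(t) < \infty$ on $[0, t_0)$. Consequently $M$ is finite on the whole strip $(-\infty, t_0)$, and in particular $0$ lies in the \emph{interior} of this strip. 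By the classical analyticity of the moment generating function in the interior of its convergence domain, I may differentiate under the expectation at the origin to obtain $M(0) = 1$ and $M'(0) = \mu$.

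Next I would apply Markov's inequality to the exponentiated sum. For any $t \in [0, t_0)$, independence gives
\begin{equation}\label{Eq_Plan_Chernoff}
\Pr\left\{\frac{X_1 + \dots + X_n}{n} \geq K\mu\right\} \leq e^{-t n K \mu}\,\mathbb{E}\{e^{t(X_1 + \dots + X_n)}\} = \left(e^{-t K \mu} M(t)\right)^n .
\end{equation}
Writing $\Lambda(t) \triangleq \log M(t)$ for the cumulant generating function and $\phi(t) \triangleq t K \mu - \Lambda(t)$, the right-hand side of (\ref{Eq_Plan_Chernoff}) equals $e^{-n \phi(t)}$. It therefore suffices to exhibit a single $t^\ast \in (0, t_0)$ at which $\phi(t^\ast) > 0$; no full variational optimization over $t$ is needed.

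The core of the argument is then a first-order expansion of $\phi$ at the origin. From the first step, $\Lambda(0) = 0$ and $\Lambda'(0) = \mu$, so $\phi(0) = 0$ and $\phi'(0) = K\mu - \mu = (K-1)\mu$. Because the channel power has strictly positive mean $\mu > 0$ and $K > 1$, this derivative is strictly positive, whence $\phi(t) > 0$ for all sufficiently small $t > 0$. Fixing any such $t^\ast$ and setting $c_2 \triangleq \phi(t^\ast) > 0$ reduces (\ref{Eq_Plan_Chernoff}) to $\Pr\{\cdot\} \leq e^{-c_2 n}$, which is of the claimed form; one may take $c_1 = 1$ with $n_0$ arbitrary, or slightly enlarge $c_1$ above $1$ to render the inequality strict.

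I expect the only delicate point to be the justification of $M'(0) = \mu$, namely that differentiation may be passed through the expectation at $t = 0$. This is exactly where nonnegativity of $X_1$ does the essential work: it forces $0$ into the interior of the convergence strip $(-\infty, t_0)$ rather than onto its boundary, so the standard smoothness of the moment generating function in the interior of its domain applies directly. Everything else is elementary convex bookkeeping, and the super-exponential (NHT) assumption enters only to guarantee $t_0 > 0$, ensuring that a strictly positive Chernoff exponent is genuinely available.
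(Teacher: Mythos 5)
Your argument is correct, and it supplies something the paper itself does not: the paper's ``proof'' of Lemma \ref{Lemma_Lower_LDP_NHT} is purely a citation to \cite{Durrett} and \cite{Hollander} (the upper-bound half of Cram\'{e}r's theorem), with no argument given. Your Chernoff derivation is in essence the standard proof of that cited result, but streamlined in a useful way: rather than identifying the optimal exponent $\sup_{t}\left(tK\mu-\Lambda(t)\right)$ (the Legendre transform appearing in Cram\'{e}r's theorem) and proving it strictly positive, you exhibit a single $t^\ast\in(0,t_0)$ with $\phi(t^\ast)>0$ via $\phi(0)=0$ and $\phi'(0)=(K-1)\mu>0$, which is all the lemma's qualitative statement requires. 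Two small points where you lean on the paper's context rather than the lemma's literal hypotheses are worth recording. First, your justification of $M'(0)=\mu$ uses $X_1\geq 0$ to place $0$ in the interior of the domain of finiteness of $M$; the lemma as stated allows arbitrary NHT variables, whose left tail could make $M(t)=\infty$ for all $t<0$. This is harmless here (the $X_i$ are channel powers), and in the general case one can instead use convexity of $\Lambda$ on $[0,t_0)$ and monotone convergence to obtain the right derivative $\Lambda'(0^{+})=\mu$, so that $\Lambda(t)/t\to\mu<K\mu$ as $t\downarrow 0$, which is all your argument needs. Second, you need $\mu>0$ for $\phi'(0)$ to be strictly positive; this is implicit in the paper's model, and indeed the lemma itself would be false for a degenerate zero-mean nonnegative variable, so it is a hypothesis of the statement rather than a gap in your proof. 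With those remarks in place, your choice $c_2=\phi(t^\ast)$ and any $c_1>1$ closes the proof.
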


\begin{proof}[Proof of Lemma \ref{Lemma_Lower_LDP_NHT}]
The proof of Lemma \ref{Lemma_Lower_LDP_NHT} can be found in \cite{Durrett} (see also \cite{Hollander} Ch. 1).
\end{proof}

The next lemma is a result from the large deviations theory for the distributions of HT type:

\begin{lem}\label{Lemma_Lower_LDP_HT}
Suppose the random variable $X$ (with $\mathbb{E}\{X\}=\mu$)  is of HT type (as defined in definition \ref{Def_Condition_HT_RV}). Then, for the sum of $n$ i.i.d. such random variables we will have:
\begin{equation}\label{Eq_Lemma_LDP_NHT}
\Pr\{ X_1 +\dots+ X_{n} >(1+\delta)n\mu\}\sim n\Pr\{ X_1>  n \mu \delta\},
\end{equation}
where $\delta>0$ is arbitrarily small, and $\sim$ stands for asymptotic equivalence\footnote{For details refer to \cite{Mikosch}.}.
\end{lem}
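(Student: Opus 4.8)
The plan is to recognize Lemma \ref{Lemma_Lower_LDP_HT} as an instance of the \emph{principle of a single big jump} for subexponential sums, so that the real work reduces to invoking the appropriate Nagaev-type large-deviation theorem and checking that its hypotheses are met by each of the three tail classes of Definition \ref{Def_Condition_HT_RV}. Writing $S_n = X_1 + \dots + X_n$ and $\bar F = 1 - F$ for the tail, I would first rewrite the event as $\{S_n - n\mu > \delta n\mu\}$, so that the quantity of interest is the probability of a deviation of order $n$ above the mean. For heavy-tailed (subexponential) laws such a deviation is realized overwhelmingly through a single summand taking an atypically large value while the remaining $n-1$ summands stay close to their mean, and this is exactly the content of the asserted equivalence.

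Next I would make the heuristic precise, which fixes both the prefactor and the argument on the right-hand side. If one summand $X_j$ exceeds roughly $\delta n\mu$, then since the other $n-1$ summands contribute approximately $(n-1)\mu \approx n\mu$ by the law of large numbers, the total already surpasses $(1+\delta)n\mu$ up to lower-order corrections; there are $n$ choices for the index $j$, and these events are asymptotically disjoint and equiprobable, producing the factor $n$ together with the term $\Pr\{X_1 > \delta n\mu\}$. The matching upper bound — that no competing mechanism contributes at the same order — is the substantive half and is supplied by Fuk--Nagaev / truncation inequalities, which bound the probability that $S_n$ is large \emph{without} any single summand being large and show this contribution is negligible relative to $n\,\bar F(\delta n\mu)$.

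The remaining step is to verify that the linear threshold $y_n = \delta n\mu$ lies inside the \emph{large-deviation range} on which the single-big-jump asymptotic $\Pr\{S_n - n\mu > y_n\} \sim n\,\bar F(y_n)$ holds, separately for each tail type. For a regularly varying tail (\ref{Eq_Def_HT_RV_Regularly_Varying_Tail}) of index $\alpha > 2$ this is the classical theorem of A.\,V. and S.\,V. Nagaev, valid uniformly over all deviations $y_n \geq c\,n$, so the linear threshold is automatically covered. For the lognormal-type tail (\ref{Eq_Def_HT_RV_LogNormal_Tail}) and the Weibull-like tail (\ref{Eq_Def_HT_RV_Weibull_Like_Tail}) the valid range is more restrictive: its lower boundary grows with $n$, and the parameter constraints $\gamma > 1$ and $0 < \alpha < 1/2$ are precisely the conditions under which this boundary grows slower than $n$, so that $\delta n\mu$ still falls in the big-jump regime. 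The corresponding statements are those collected in \cite{Mikosch}, which I would cite directly once the range condition is confirmed.

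I expect the main obstacle to be this last verification for the Weibull-like class. The single-big-jump principle degrades as the tail becomes lighter: the big-jump regime shrinks as $\alpha$ increases toward $1$, and for heavier-than-linear light tails an intermediate Gaussian/Cram\'er-type contribution competes with the big jump at deviations of order $n$. One must therefore confirm that the restriction $\alpha < 1/2$ indeed pushes the crossover point below $\delta n\mu$ and invoke the correct version of the theorem in that sub-regime. Once this range condition is checked for all three tail classes, the claimed equivalence follows directly.
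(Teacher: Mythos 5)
Your proposal is correct and takes essentially the same route as the paper: the paper's entire proof of this lemma is a one-line citation to \cite{Mikosch}, whose Nagaev-type single-big-jump theorems for subexponential sums are exactly what you invoke. Your extra work --- checking that the linear deviation scale $\delta n\mu$ lies inside the big-jump range for each tail class, which is precisely what the restrictions $\alpha>2$, $\gamma>1$, and $0<\alpha<1/2$ in Definition \ref{Def_Condition_HT_RV} are there to guarantee --- is more detail than the paper supplies, but it is the same underlying argument rather than a different approach.
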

\begin{proof}[Proof of Lemma \ref{Lemma_Lower_LDP_HT}]
The proof of Lemma \ref{Lemma_Lower_LDP_HT} can be found in \cite{Mikosch}.
\end{proof}

Next, we provide the proof of the theorem.

\begin{proof} [Proof of Theorem \ref{Th_Main_Theorem_Lower_Bound}]
\begin{itemize}
\item Proof Strategy
\end{itemize}
Define $t\triangleq G^{-1}(n)$, where $G(.)$ is defined in (\ref{Eq_Main_Theorem_Lower_Bound_G_X}) and $n$ is the number of source-destination pairs. We prove that it is possible to have $t_1 \triangleq (1-\epsilon)t^{1-\delta}$ concurrent successful transmissions with high probability, where $\epsilon$ and $\delta$ are arbitrarily small positive constants. In order to prove this, at each time slot, we propose a subset of nodes with $(1-\epsilon)t^{1-\delta}$ members as a candidate for the set of active nodes $\mathbb{S}$. The proposed candidate set of active nodes consists  of the nodes with the best direct link power, as stated in (\ref{Eq_Th1_Proof_Idea_Candidate_Set}) where the main idea behind the proof was explained ($t_1$ indicated in (\ref{Eq_Th1_Proof_Idea_Candidate_Set}) is set equal to $(1-\epsilon)t^{1-\delta}$ in this proof) . Then, we prove that all destinations of nodes in $\mathbb{S}$ will satisfy the constraint $SINR>\beta$, with high probability, where $\beta$ is the $SINR$ target.

\begin{itemize}
\item[o] Power of the Desired Signal at Each Destination
\end{itemize}
Define the followings:
\begin{eqnarray}\label{Eq_Main_Theorem_Lower_Bound_Proof_Definitions}
%	m_1 &\triangleq& (1-\epsilon)m^{1-\delta} \\ \nonumber
	r_1 &\triangleq& n-t_1+1, \\ \nonumber
	\phi &\triangleq& K\mu t_1,
\end{eqnarray}
where $K>1$ is constant. Then, $\gamma_{(r_1),(r_1)}$ is the power of the desired signal at the weakest direct link among active sources. In order to prove that all sources will have successful transmission, we have to analyze the statistical properties of $\gamma_{(r_1),(r_1)}$ for which we will use Lemma \ref{Lemma_Lower_Falk}. From the construction of the set of candidate active nodes, it is clear that the involved random variables in our problem are $\gamma_{i,i}$'s, and we need to investigate their order statistics $\gamma_{(i),(i)}$'s. Thus, to use Lemma \ref{Lemma_Lower_Falk} we set:
\begin{equation}\label{Eq_Main_Theorem_Lower_Bound_Proof_Lemma_Match}
	X_{(i)}=\gamma_{(i),(i)},
\end{equation}
for $i=1,\dots,n$. Also, we have for the corresponding sequences
\begin{eqnarray}\label{Eq_Main_Theorem_Lower_Bound_Proof_An_Bn}
	a_n&=&F^{-1}\left(1-\frac{t_1}{n}\right), \\ \nonumber
	b_n&=&\frac{\sqrt{t_1}}{nf(a_n)}.
\end{eqnarray}
Then, we have:
\begin{eqnarray}\label{Eq_Main_Theorem_Lower_Bound_Proof_Main_Inequalities}
	\lim_{n \rightarrow \infty} \Pr \left\{ \gamma_{(r_1),(r_1)} > \beta \phi \right\} &=& \lim_{n \rightarrow \infty} \Pr \left\{ \gamma_{(r_1),(r_1)} > \beta K\mu (1-\epsilon)t^{1-\delta}\right\} \\ \nonumber
	&\stackrel{(a)}{\geq}& \lim_{n \rightarrow \infty} \Pr \left\{ \gamma_{(r_1),(r_1)} >  (1-\epsilon)\frac{t\beta\mu}{2}\right\} \\ \nonumber
	&\stackrel{(b)}{=}& \lim_{n \rightarrow \infty} \Pr \left\{ \gamma_{(r_1),(r_1)} >  (1-\epsilon)F^{-1}\left(1-\frac{t}{n}\right)\right\} \\ \nonumber
	&\stackrel{(c)}{\geq}& \lim_{n \rightarrow \infty} \Pr \left\{ \gamma_{(r_1),(r_1)} >  (1-\epsilon)F^{-1}\left(1-\frac{t_1}{n}\right)\right\} \\ \nonumber
	&\stackrel{(d)}=& \lim_{n \rightarrow \infty} \Pr \left\{ \gamma_{(r_1),(r_1)} >  (1-\epsilon)a_n\right\} \\ \nonumber
	&=& \lim_{n \rightarrow \infty} \Pr \left\{ \frac{\gamma_{(r_1),(r_1)} -a_n}{b_n} >  -\epsilon\frac{a_n}{b_n}\right\} \\ \nonumber
	&\stackrel{(e)}{=}& 1.
\end{eqnarray}
Inequality (a) is valid for large-enough\footnote{By the term ``large-enough'' we mean that there exists a constant $t_0$, independent of $n$, such that for all $t>t_0$ this fact holds. Thus, inequality (a) is valid for large-enough $t$, since, $t^{-\delta}$ is less than any positive constant for large-enough $t$.} $t$ . Equality (b) is due to the fact that we have set $t=G^{-1}(n)$, which results in $(t\beta\mu)/2=F^{-1}(1-t/n)$. Inequality (c) uses the fact that $F^{-1}(x)$ is an increasing function and the fact that $t_1=(1-\epsilon)t^{1-\delta}<t$. In step (d), we have used (\ref{Eq_Main_Theorem_Lower_Bound_Proof_An_Bn}). Finally, equality (e) is due to Lemma \ref{Lemma_Lower_Falk} and Lemma \ref{Lemma_Lower_an_bn}.

Thus, we have proved that:
\begin{eqnarray}\label{Eq_Main_Theorem_Lower_Bound_Proof_Desired_SINR_Condition}
	\lim_{n \rightarrow \infty} \Pr \left\{ (\gamma_{(r_1),(r_1)}>\beta \phi) \cap (\gamma_{(r_1+1),(r_1+1)}>\beta \phi) \cap \dots \cap  (\gamma_{(n),(n)}>\beta \phi) \right\}
 &=& \lim_{n \rightarrow \infty} \Pr \left\{ \gamma_{(r_1),(r_1)}>\beta\phi\right\} \\ \nonumber
 &=& 1,
\end{eqnarray}
which states that the power of the desired signal at all destinations are simultaneously above $\beta\phi$, with high probability.

\begin{itemize}
\item[o] Power of Unwanted Interference at Each Destination
\end{itemize}
The harmful interference for the transmission from $S_{(n-i+1)}$ to $D_{(n-i+1)}$ is denoted by $I_i$ (for $i=1,\dots,t_1$). Since while sorting the source-destination pairs we have not paid any attention to the cross-links, $I_i$'s are statistically equivalent. Then, we will have:
\begin{eqnarray}\label{Eq_Main_Theorem_Lower_Bound_Proof_Desired_Interference_Iequalities_NHT}
\Pr\{(I_1<\phi) \cap \dots \cap (I_{t_1}<\phi) \} &=& 1- \Pr\{(I_1\geq\phi) \cup \dots \cup (I_{t_1}\geq\phi)\} \\ \nonumber
&\geq& 1-\sum_{i=1}^{t_1}{\Pr\{I_i \geq \phi\}} \\ \nonumber
& \geq& 1-t_1c_1e^{-c_2t_1} \rightarrow 1 \hspace{1em} \rm as \it \hspace{1em} t_1 \rightarrow \infty,
\end{eqnarray}
where the last inequality follows from Lemma \ref{Lemma_Lower_LDP_NHT} by remembering that each $I_i$ is the sum of $t_1$ i.i.d. random variables with mean $\mu$. It should be noted that Lemma \ref{Lemma_Lower_LDP_NHT} only applies to distribution of NHT type. For distributions of HT type we use Lemma \ref{Lemma_Lower_LDP_HT} to get to the same conclusion:
\begin{eqnarray}\label{Eq_Main_Theorem_Lower_Bound_Proof_Desired_Interference_Iequalities_HT}
\Pr\{(I_1<\phi) \cap \dots \cap (I_{t_1}<\phi) \} &=& 1- \Pr\{(I_1\geq\phi) \cup \dots \cup (I_{t_1}\geq\phi)\} \\ \nonumber
&\geq& 1-\sum_{i=1}^{t_1}{\Pr\{I_i \geq \phi\}} \\ \nonumber
& =&1-t_1\Pr\{I_1 \geq \phi\}  \rightarrow 1 \hspace{1em} \rm as \it \hspace{1em} t_1 \rightarrow \infty,
\end{eqnarray}
which results from the following facts in the three categories of HT distributions:

\begin{itemize}

\item
Regularly varying tail:

\begin{equation}\label{Eq_Main_Theorem_Lower_Bound_Proof_Interference_Regularly_Varying_Tail}
    t_1\Pr\{I_1>\phi\}=t_1^2 \times \frac{L((K-1)\mu t_1)}{((K-1)\mu t_1)^\alpha}\rightarrow 0,
\end{equation}
for $\alpha>2$.

\item
Log-normal type tail:
\begin{equation}\label{Eq_Main_Theorem_Lower_Bound_Proof_Interference_LogNormal_Tail}
    t_1\Pr\{I_1>\phi\}\sim t_1^2 \times  c((K-1)\mu t_1)^\beta e^{-\lambda\log^\gamma ((K-1)\mu t_1)} \rightarrow 0,
\end{equation}
for $\gamma>1$ and $\lambda>0$.

\item
Weibull-like tail:
\begin{equation}\label{Eq_Main_Theorem_Lower_Bound_Proof_Interference_Weibull_Like_Tail}
    t_1\Pr\{I_1>\phi\}\sim t_1^2 \times  c((K-1)\mu t_1)^\beta e^{-\lambda ((K-1)\mu t_1)^\alpha} \rightarrow 0,
\end{equation}
for $0<\alpha<0.5$ and $\lambda>0$.

\end{itemize}

Thus, the interference at all destinations are simultaneously below $\phi$, with high probability (w.h.p.).

\begin{itemize}
\item[o] Signal to Interference and Noise Ratio
\end{itemize}
Finally, by considering Eq. (\ref{Eq_Main_Theorem_Lower_Bound_Proof_Desired_SINR_Condition}), Eq. (\ref{Eq_Main_Theorem_Lower_Bound_Proof_Desired_Interference_Iequalities_NHT}) and Eq. (\ref{Eq_Main_Theorem_Lower_Bound_Proof_Desired_Interference_Iequalities_HT}) -- for both NHT and HT type distributions -- we have proved that the desired signal power is greater than $\beta\phi$ at all the destinations, while, the interference is less than $\phi$ at all of them. Thus, we conclude that all the transmissions satisfy the $SINR$ constraint simultaneously (w.h.p.), which concludes the proof of the theorem.
\end{proof}

\section*{Appendix B: Proof of Corollaries \ref{Cor_Lower_Nakagami}, \ref{Cor_Lower_Weibull}, \ref{Cor_Lower_Pareto}, and \ref{Cor_Lower_Log_Normal}}\label{App2}
The proof of all four corollaries is done in three consecutive steps. For Corollary \ref{Cor_Lower_Nakagami} we have:

\begin{proof}[Proof of Corollary \ref{Cor_Lower_Nakagami}]
\begin{itemize}
\item Checking Conditions of the Theorem

First, we should verify that the Gamma distribution satisfies the theorem conditions. It is straightforward to show that the Gamma distribution satisfies the condition set 1, which is shown in detail in \cite{Aggarwal}. Also, since $\mathbb{E}\{e^{t x}\}=1/(1-\frac{\Omega}{m} t)^m$ \cite{Walck}, it satisfies the Cram\'{e}r's condition. Therefore it is of NHT type, and, we can apply Theorem \ref{Th_Main_Theorem_Lower_Bound} to Nakagami fading case.

\item Calculating the order of growth of $G^{-1}(x)$

In order to find the order of growth of $G^{-1}(x)$, first we should find the asymptotic expression of $G(x)$. In order to do that, we use the following facts:

First, we have lower and upper incomplete gamma functions as follows, respectively (\cite{Abramowitz}, Ch. 6):
\begin{equation}\label{Eq_Proof_Cor1_Lower_Incomplete_Gamma}
	\gamma(a,x) \triangleq \int_{0}^{x}{t^{a-1}e^{-t}dt},
\end{equation}
and
\begin{equation}\label{Eq_Proof_Cor1_Upper_Incomplete_Gamma}
	\Gamma(a,x) \triangleq \int_{x}^{\infty}{t^{a-1}e^{-t}dt}.
\end{equation}
Second, we know that (\cite{Abramowitz}, Ch. 6):
\begin{equation}\label{Eq_Proof_Cor1_Lower_Upper_Gamma_Sum}
	\gamma(a,x)+\Gamma(a,x)=\Gamma(a).
\end{equation}
Third, we have (\cite{Abramowitz}, Ch. 6):
\begin{equation}\label{Eq_Proof_Cor1_Upper_Gamma_Asymptotic}
	\frac{\Gamma(a,x)}{x^{a-1}e^{-x}} \rightarrow 1, \hspace{1em} \rm as \it \hspace{1em} x \rightarrow \infty.
\end{equation}
Finally, from relation (\ref{Eq_Main_Theorem_Lower_Bound_G_X}), by using the relations (\ref{Eq_Proof_Cor1_Lower_Incomplete_Gamma}) to (\ref{Eq_Proof_Cor1_Upper_Gamma_Asymptotic}), and with some calculations, one observes that
\begin{eqnarray}\label{Eq_Proof_Cor1_G_X_Asymptotic}
    G(x)&\stackrel{(a)}=&\frac{x}{1-F\left(\beta\mu x /2\right)} \\ \nonumber
        &\stackrel{(b)}=&\frac{x}{1-\frac{\gamma(m,m\beta x /2)}{\Gamma(m)}} \\ \nonumber
        &\stackrel{(c)}=&\frac{\Gamma(M) x}{\Gamma(m,m\beta x /2)} \\ \nonumber
	 &\stackrel{(d)}\rightarrow& \Gamma(m) \left(\frac{2}{\beta m}\right)^{m-1}x^{2-m}e^{\frac{\beta}{2}mx} \hspace{1em} \rm as \it \hspace{1em} x \rightarrow \infty,
\end{eqnarray}
where (a), (b), (c) and (d) are  due to (\ref{Eq_Main_Theorem_Lower_Bound_G_X}), (\ref{Eq_Cor_Nakagami_CDF}), (\ref{Eq_Proof_Cor1_Lower_Upper_Gamma_Sum}) and (\ref{Eq_Proof_Cor1_Upper_Gamma_Asymptotic}), respectively. Now, our next step is to calculate the order of growth of $G^{-1}(x)$. First, we observe that there exist positive constants $c_1$ and $c_2$ such that for large-enough $y$ we have:
\begin{equation}\label{Eq_Proof_Cor1_Inverse_G_X_Asymptotic_1}
    c_1\log G(y) <y <c_2 \log G(y)
\end{equation}
Then, setting $y=G^{-1}(x)$ yields the following expression for large-enough $x$\footnote{Note that $G(x)$ is an strictly increasing function.}:
\begin{equation}\label{Eq_Proof_Cor1_Inverse_G_X_Asymptotic_2}
    c_1\log x <G^{-1}(x) <c_2 \log x,
\end{equation}
which will result in
\begin{equation}\label{Eq_Proof_Cor1_Inverse_G_X_Asymptotic_2}
	G^{-1}(x) =\Theta\left( \log(x) \right).
\end{equation}

\item Achievable Throughput

Now we are ready to state the throughput result. By applying (\ref{Eq_Main_Theorem_Lower_Bound_General_Throughput}) we will have
\begin{eqnarray}\label{Eq_Proof_Cor1_Throughput}
	T&=&\Omega\left(G^{-1}(n)\right) \\ \nonumber
	&=&\Omega\left(\log(n)\right).
\end{eqnarray}
\end{itemize}
It is important to note that the case of $m=1$ will result in the Rayleigh fading environment.
\end{proof}
As in the previous corollary, we prove Corollary \ref{Cor_Lower_Weibull} through the following three consecutive steps:

\begin{proof}[Proof of Corollary \ref{Cor_Lower_Weibull}]
\begin{itemize}
\item Checking Conditions of the Theorem

It is easy to check that Weibull distribution satisfies condition set 1, which can be found in detail in \cite{Aggarwal}. Subsequently, if we have $k \geq 1$, then the Weibull distribution will be of NHT (super-exponential) type, while if $0<k<0.5$, the Weibull distribution satisfies the conditions required for the HT (sub-exponential) type, as given in Definition \ref{Def_Condition_HT_RV}. Thus, this corollary holds for $k\in (0,0.5) \cup [1,\infty)$.

\item Calculating the order of growth of $G^{-1}(x)$

It is easy to see that
\begin{equation}\label{Eq_Proof_Cor2_G_X_Asymptotic}
	G(x)=xe^{(\beta \Gamma(1+1/k) x /2)^k}.
\end{equation}
Now, our next step is to calculate the order of growth of $G^{-1}(x)$. First, we observe that there exist positive constants $c_1$ and $c_2$ such that for large-enough $y$ we have:
\begin{equation}\label{Eq_Proof_Cor2_Inverse_G_X_Asymptotic_1}
    c_1(\log G(y))^{1/k} <y <c_2 (\log G(y))^{1/k}.
\end{equation}
Then, setting $y=G^{-1}(x)$ yields the following expression for large-enough $x$:
\begin{equation}\label{Eq_Proof_Cor2_Inverse_G_X_Asymptotic_2}
    c_1(\log x)^{1/k} <G^{-1}(x) <c_2 (\log x)^{1/k},
\end{equation}
which will result in
\begin{equation}\label{Eq_Proof_Cor2_Inverse_G_X_Asymptotic_3}
	G^{-1}(x) =\Theta\left( (\log(x))^{1/k} \right).
\end{equation}

\item Achievable Throughput

Finally, by applying (\ref{Eq_Main_Theorem_Lower_Bound_General_Throughput}) we will have
\begin{equation}\label{Eq_Proof_Cor2_Throughput}
 T=\Omega\left(\left(\log(n)\right)^{1/k}\right).
\end{equation}
\end{itemize}
\end{proof}
Like before, for Corollary \ref{Cor_Lower_Pareto} we will follow three steps:

\begin{proof}[Proof of Corollary \ref{Cor_Lower_Pareto}]
\begin{itemize}

\item Checking Conditions of the Theorem

It can be easily checked that the Generalized Pareto distribution satisfies $\lim_{x \rightarrow \infty} x h(x) =c_0 >0$, and also is of HT type in the class of regularly varying tail distributions. Thus, it satisfies the theorem conditions.

\item Calculating the order of growth of $G^{-1}(x)$

It is easy to observe that
\begin{equation}\label{Eq_Proof_Cor3_Inverse_G_X_Asymptotic}
    G^{-1}(x)=\Theta\left(x^{1/(1+\alpha)}\right).
\end{equation}

\item Achievable Throughput

Finally, by applying (\ref{Eq_Main_Theorem_Lower_Bound_General_Throughput}) we will have
\begin{eqnarray}\label{Eq_Proof_Cor3_Throughput}
	T&=&\Omega\left(G^{-1}(n)\right) \\ \nonumber
	&=&\Omega\left(n^{1/(1+\alpha)}\right).
\end{eqnarray}
\end{itemize}
\end{proof}
Finally, for Corollary \ref{Cor_Lower_Log_Normal} we have:

\begin{proof}[Proof of Corollary \ref{Cor_Lower_Log_Normal}]
\begin{itemize}

\item Checking Conditions of the Theorem

It can be easily checked that the Log-normal distribution satisfies the condition set 1, which is provided in detail in \cite{Aggarwal}. Also, it is easy to note that it is of HT type.

\item Calculating the order of growth of $G^{-1}(x)$

First, in order to manipulate $G(x)$, we use the asymptotic expansion of the error function as follows:
\begin{equation}\label{Eq_Proof_Cor4_F_X_Asymptotic}
    1-F(x) \rightarrow \frac{\sigma}{\sqrt{2\pi}}\frac{e^{-(\log x -\mu)^2/2\sigma^2}}{\log x -\mu}, \hspace{1em} \rm as \it \hspace{1em} x \rightarrow \infty.
\end{equation}

Now, our next step is to calculate the order of growth of $G^{-1}(x)$. First we observe that there exist positive constants $c_1$ and $c_2$ such that for large-enough $y$ we have:
\begin{equation}\label{Eq_Proof_Cor4_Inverse_G_X_Asymptotic_1}
    c_1e^{\sqrt{2}\sigma\sqrt{\log(G(y))}} <y <c_2 e^{\sqrt{2}\sigma\sqrt{\log(G(y))}}.
\end{equation}
Then, setting $y=G^{-1}(x)$ yields the following expression for large-enough $x$:
\begin{equation}\label{Eq_Proof_Cor4_Inverse_G_X_Asymptotic_2}
    c_1e^{\sqrt{2}\sigma\sqrt{\log(x)}} <G^{-1}(x) <c_2 e^{\sqrt{2}\sigma\sqrt{\log(x)}},
\end{equation}
which will result in
\begin{equation}\label{Eq_Proof_Cor4_Inverse_G_X_Asymptotic_3}
    G^{-1}(x)=\Theta\left(e^{\sqrt{2}\sigma\sqrt{\log(x)}}\right).
\end{equation}

\item Achievable Throughput

Thus, we will have
\begin{eqnarray}\label{Eq_Proof_Cor4_Throughput}
	T&=&\Omega\left(G^{-1}(n)\right) \\ \nonumber
	&=&\Omega\left(e^{\sqrt{2}\sigma\sqrt{\log(n)}}\right).
\end{eqnarray}

\end{itemize}
\end{proof}

\section*{Appendix C: Proof of Theorem \ref{Th_Upper_NHT}}\label{App3}

Before presenting the proof of Theorem \ref{Th_Upper_NHT} we need some lemmas. The first lemma is an application of the large deviations theorem:

\begin{lem}\label{Lemma_Upper_Bound_Proof_LDP_NHT}
 If  $X_1,\dots,X_t$ are i.i.d. random variables, with the average of $\mu$, satisfying Cram\'{e}r's condition, then there exist $t_0$, $c_3>0$ and $c_4>0$ such that
\begin{equation}\label{Eq_Lemma_Upper_Bound_Proof_LDP_NHT}
	 \Pr\left\{ \frac{X_1+\dots+X_t}{t}<\frac{\mu}{2}\right\} <c_3e^{-c_4t},
\end{equation}
for all $t>t_0$.
\end{lem}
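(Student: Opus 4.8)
The plan is to establish this lower-tail large deviations estimate via a standard exponential Chernoff bound, exploiting the fact that the channel powers are non-negative. Writing $S_t = X_1 + \dots + X_t$ and fixing any $s > 0$, I would start from the exponential Markov inequality applied to $-S_t$:
\begin{equation}
\Pr\left\{ \frac{S_t}{t} < \frac{\mu}{2} \right\} = \Pr\left\{ e^{-sS_t} > e^{-st\mu/2} \right\} \leq e^{st\mu/2}\,\mathbb{E}\!\left[e^{-sS_t}\right] = \left( e^{s\mu/2}\,\mathbb{E}\!\left[e^{-sX_1}\right]\right)^t,
\end{equation}
where the last equality uses independence and identical distribution of the $X_i$.

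Then I would define the per-sample base $\psi(s) \triangleq e^{s\mu/2}\,\mathbb{E}[e^{-sX_1}]$, so that the bound reads $\Pr\{S_t/t < \mu/2\} \le \psi(s)^t$, and it suffices to exhibit one $s_0 > 0$ with $\psi(s_0) < 1$. Since each $X_i \ge 0$, the Laplace transform $M(s) \triangleq \mathbb{E}[e^{-sX_1}]$ satisfies $e^{-sX_1} \le 1$ for $s \ge 0$, hence $M(s) \le 1 < \infty$ on $[0,\infty)$; thus no integrability issue arises for the lower tail, and Cram\'er's condition is in fact only needed to place $X$ in the NHT class, not for this particular estimate. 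Differentiating under the expectation (justified by dominated convergence, using the finite mean $\mu$) gives $M(0) = 1$ and $M'(0) = -\mathbb{E}[X_1] = -\mu$. Therefore $\psi(0) = 1$ and
\begin{equation}
\psi'(0) = \frac{\mu}{2}M(0) + M'(0) = \frac{\mu}{2} - \mu = -\frac{\mu}{2} < 0,
\end{equation}
the strict inequality holding because $\mu = \mathbb{E}\{\gamma\} > 0$ for any genuine channel power distribution.

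Finally, since $\psi$ is continuous with $\psi(0) = 1$ and $\psi'(0) < 0$, there is a small $s_0 > 0$ at which $\psi(s_0) < 1$. Setting $c_4 \triangleq -\log \psi(s_0) > 0$ and $c_3 \triangleq 1$ (or any constant strictly larger than $1$ to obtain the strict inequality) yields $\Pr\{S_t/t < \mu/2\} \le \psi(s_0)^t = e^{-c_4 t} < c_3 e^{-c_4 t}$, with $t_0$ taken to be any positive integer. The only genuinely delicate point is the differentiation-under-the-expectation step establishing $M'(0) = -\mu$, and hence $\psi'(0) < 0$; everything else is routine Chernoff machinery. Alternatively, one could invoke Cram\'er's theorem directly, noting that the Legendre rate function $I$ vanishes only at the mean $\mu$, so that $I(\mu/2) > 0$ delivers the exponential rate. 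I would favor the explicit Chernoff argument above, however, precisely because it produces the constants $c_3$ and $c_4$ concretely, as the statement demands.
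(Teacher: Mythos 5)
Your proof is correct, but it takes a genuinely different route from the paper: the paper does not prove this lemma at all --- it simply cites den Hollander's \emph{Large Deviations}, Ch.~1 (i.e., the standard Cram\'{e}r/Chernoff theory), leaving the constants $c_3$ and $c_4$ implicit. Your explicit argument is sound at every step: the event identity $\{S_t/t<\mu/2\}=\{e^{-sS_t}>e^{-st\mu/2}\}$ for $s>0$, the Markov step, the i.i.d.\ factorization, and the computation $\psi(0)=1$, $\psi'(0^+)=\mu/2-\mu=-\mu/2<0$ (the differentiation at $0$ being justified by dominated convergence via $0\le 1-e^{-sx}\le sx$ for $x,s\ge 0$, with dominating function $X_1$ of finite mean), after which $c_4=-\log\psi(s_0)$ and any $c_3>1$ give the strict inequality for all $t\ge 1$. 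What your approach buys is exactly what the citation hides: concrete constants, and the sharp observation that Cram\'{e}r's condition (an \emph{upper}-tail exponential moment, per the paper's Definition 2) plays no role in this \emph{lower}-tail bound --- it is non-negativity of the $X_i$ that makes $M(s)=\mathbb{E}[e^{-sX_1}]$ finite for $s\ge 0$. One caveat you should state rather than leave implicit: the lemma as written does not assume $X_i\ge 0$; you import that from the paper's context (the $X_i$ are channel powers $\gamma_{j,i}$). This is legitimate here, but it is essential and not cosmetic: for a random variable with a heavy left tail, Cram\'{e}r's condition as defined in the paper would hold while the lower-tail probability decays only polynomially, so the lemma's literal statement needs either non-negativity or a two-sided moment condition. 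In short, your proof is self-contained, quantitatively explicit, and slightly more honest about the hypotheses than the paper's one-line citation.
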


\begin{proof}[Proof of Lemma \ref{Lemma_Upper_Bound_Proof_LDP_NHT} ]
The proof of Lemma \ref{Lemma_Upper_Bound_Proof_LDP_NHT} can be found in \cite{Hollander}, Ch. 1.
\end{proof}

\begin{lem}\label{Lemma_Upper_Bound_Proof_ABC}
	Consider random variables $A$ and $B$. Then, we have the following
	\begin{equation}\label{Eq_Lemma_Upper_Bound_Proof_ABC}
		\Pr \{A>B\} \leq \Pr\{A>C\} +\Pr\{B<C\},
	\end{equation}
	for any arbitrary random variable $C$.
\end{lem}

\begin{proof}[Proof of Lemma \ref{Lemma_Upper_Bound_Proof_ABC} ]
For two random variables $A$ and $B$ and any random variable $C$, we have
\begin{eqnarray}\label{Eq_Lemma_Proof_Upper_Bound_Proof_ABC_1}
	A>B &\rightarrow& \overline{\left(A\leq C \cap B\geq C\right)} \\ \nonumber
		&=& (A > C) \cup (B<C),
\end{eqnarray}
where $\overline{D}$ is the complement of the event $D$. Thus,
	\begin{equation}\label{Eq_Lemma_Proof_Upper_Bound_Proof_ABC_1}
		\Pr \{A>B\} \leq \Pr\{A>C\} +\Pr\{B<C\}.
	\end{equation}
\end{proof}

The final necessary lemma (also used in \cite{Cui}) is:

\begin{lem}\label{Lemma_Upper_Bound_Proof_Choose}
For large $t$ and $n$ values we have
\begin{equation}\label{Eq_Lemma_Upper_Bound_Proof_Choose}
	{{n}\choose{t}} < \left(\frac{ne}{t}\right)^t.
\end{equation}
\end{lem}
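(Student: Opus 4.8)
The plan is to prove the inequality directly from the definition of the binomial coefficient together with an elementary lower bound on the factorial, so that no asymptotic machinery is required. First I would expand
\begin{equation}
{n \choose t} = \frac{n!}{t!\,(n-t)!} = \frac{1}{t!}\prod_{j=0}^{t-1}(n-j),
\end{equation}
and observe that each of the $t$ factors in the product satisfies $n-j \le n$. Hence $\prod_{j=0}^{t-1}(n-j) \le n^t$, which yields the clean upper bound ${n \choose t} \le n^t / t!$. This step reduces the whole problem to controlling the factorial in the denominator from below.

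Next I would establish the lower bound $t! > (t/e)^t$. The cleanest route is the Taylor series of the exponential, $e^t = \sum_{k=0}^{\infty} t^k/k!$: since every term is nonnegative and the term of index $k=t$ equals $t^t/t!$, we have $e^t > t^t/t!$ for every $t \ge 1$, the inequality being strict because the series carries additional positive terms. Rearranging gives $t! > t^t/e^t = (t/e)^t$. Stirling's formula $t! \sim \sqrt{2\pi t}\,(t/e)^t$ would give the same bound with an even sharper constant, but the series argument is self-contained and avoids invoking asymptotics.

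Finally I would combine the two estimates:
\begin{equation}
{n \choose t} \le \frac{n^t}{t!} < \frac{n^t}{(t/e)^t} = \left(\frac{ne}{t}\right)^t,
\end{equation}
which is exactly the claimed inequality. I do not expect a genuine obstacle here; the only point demanding care is the factorial bound, where one must check that $e^t > t^t/t!$ is \emph{strict} so that the concluding inequality is strict as stated. It is worth remarking that the argument in fact holds for every $1 \le t \le n$ and does not use the size of $t$ and $n$ anywhere, so the ``large $t$ and $n$'' hypothesis in the statement is stronger than what the proof requires.
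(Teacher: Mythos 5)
Your proof is correct and shares the same skeleton as the paper's: both first bound ${{n}\choose{t}} \leq n^t/t!$ and then control the factorial from below by $(t/e)^t$. The difference is in how that factorial bound is justified. The paper invokes Stirling's approximation, writing $t! \approx \sqrt{2\pi t}\,(t/e)^t$ with an approximation symbol declared ``valid for large values of $t$'' in the middle of an inequality chain --- which is precisely why the lemma is stated only for large $t$ and $n$, and which, taken literally, would need the exact one-sided Stirling bound $t! \geq \sqrt{2\pi t}\,(t/e)^t$ to be fully rigorous. Your replacement --- reading $e^t > t^t/t!$ off the exponential series $e^t = \sum_{k\geq 0} t^k/k!$ by dropping the other nonnegative terms --- is exact, elementary, and self-contained, so your version of the lemma holds for every $1 \leq t \leq n$ with no asymptotic hypothesis, as you correctly observe; this is a genuine (if modest) improvement over the paper's argument, not merely a restatement. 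Your attention to strictness is also handled properly: for $t=1$ your first step is an equality (${{n}\choose{1}} = n^1/1!$), but the strict factorial bound in the second step rescues strictness of the conclusion, and your chain of $\leq$ followed by $<$ accounts for this. Nothing in the lemma's use in Appendix C requires more than what you prove, so your argument can be substituted for the paper's without loss.
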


\begin{proof}[Proof of Lemma \ref{Lemma_Upper_Bound_Proof_Choose}]
\begin{eqnarray}\label{Eq_Lemma_Proof_Upper_Bound_Proof_Choose}
	{{n}\choose{t}} &<&\frac{n^t}{t!} \\ \nonumber
							&\stackrel{(a)}{\approx}& \frac{n^t}{\sqrt{2\pi t} \left(\frac{t}{e}\right)^t} \\ \nonumber
							&<& \left(\frac{ne}{t}\right)^t,
\end{eqnarray}
where (a), which is valid for large values of $t$, is due to the Stirling's approximation.
\end{proof}

Based on the above three lemmas, we are ready to prove Theorem \ref{Th_Upper_NHT}.

\begin{proof}[Proof of Theorem \ref{Th_Upper_NHT}]
The proof strategy is in essence similar to the one stated in \cite{Cui} with critical generalizations to the super-exponential fading distribution. Define $t\triangleq c_7\log(n)$ for some positive constant $c_7$. In a communication scheme, we call the set $\mathbb{S}$ of the sources valid, if and only if, when they constitute the active set, all the destinations of the nodes in $\mathbb{S}$ are successful in decoding the message. It is clear that one can find many possible valid sets of sources. Define $X(t)$ to be the number of valid sets which have $t$ nodes. Then, according to Markov's inequality we will have:
\begin{eqnarray}\label{Eq_Theorem_Upper_Proof_1}
	\Pr\left\{X(t) \geq 1 \right\} &\leq& \mathbb{E} \left\{X(t) \right\} \\ \nonumber
							&=& {{n}\choose{t}} \left(\Pr\left\{A\right\}\right)^t,
\end{eqnarray}
in which $A$ is the event that a node can decode its message. Thus, we have
\begin{equation}\label{Eq_Theorem_Upper_Proof_2}
	\Pr\left\{ A \right\} =\Pr\left\{ \gamma_{i,i}> \beta(N_0+\sum_{j \in \mathbb{S}, j \neq i}{\gamma_{j,i}}) \right\},
\end{equation}
for any arbitrary $i \in \mathbb{S}$. Therefore,
\begin{eqnarray}\label{Eq_Theorem_Upper_Proof_3}
	\Pr\left\{X(t) \geq 1 \right\} &\leq&  {{n}\choose{t}} \left(\Pr\left\{A\right\}\right)^t \\ \nonumber
						&\leq & {{n}\choose{t}} \left(\Pr\left\{ \gamma_{i,i}> \beta\sum_{j \in \mathbb{S}, j \neq i}{\gamma_{j,i}} \right\}\right)^t \\ \nonumber
								& \stackrel{(a)}{\leq} &  {{n}\choose{t}} \left(\Pr\left\{ \frac{\gamma_{i,i}}{\beta}> \frac{\mu t}{2} \right\} + \Pr\left\{\sum_{j \in \mathbb{S}, j \neq i}{\gamma_{j,i}}<\frac{\mu t}{2}\right\} \right)^t \\ \nonumber
							& \stackrel{(b)}{<} &  {{n}\choose{t}} \left(\Pr\left\{ \frac{\gamma_{i,i}}{\beta}> \frac{\mu t}{2}\right\} + c_3e^{-c_4t}\right)^t \\ \nonumber
						&=&  {{n}\choose{t}} \left(1-F\left(\frac{\beta \mu t}{2}\right) + c_3e^{-c_4t}\right)^t \\ \nonumber
						&\stackrel{(c)}{<}&  {{n}\choose{t}} \left(c_5e^{-c_6t}\right)^t \\ \nonumber
						&\stackrel{(d)}{<}& \left(\frac{ne}{t}\right)^t \left(c_5e^{-c_6t}\right)^t \\ \nonumber
						&\stackrel{(e)}{\rightarrow}& 0.
\end{eqnarray}
Inequality (a) follows from Lemma \ref{Lemma_Upper_Bound_Proof_ABC} (by setting $C=\mu t /2$). Inequality (b) follows from Lemma \ref{Lemma_Upper_Bound_Proof_LDP_NHT}. Inequality (c) holds since the tail of $1-F(x)$ for a super-exponential distribution can be bounded from above by an exponentially decaying function, and the sum of two exponentially decaying functions can be bounded from above by another exponentially decaying function, for large values of their argument\footnote{These bounding discussions are applicable by using -- appropriately chosen -- constant parameters in each function.}. Inequality (d) follows from Lemma \ref{Lemma_Upper_Bound_Proof_Choose}. Finally, (e) follows by setting $t = (1/c_6)\log(n)$. Thus, we have shown that (for $t=\frac{1}{c_6}\log(n)$)
\begin{equation}\label{Eq_Theorem_Upper_Proof_4}
	\lim_{t\rightarrow \infty}\Pr\left\{X(t) \geq 1 \right\} =0,
\end{equation}
which states that with high probability we cannot find any set of $t=c_7\log(n)$ sources so that all will be successful. This concludes the proof.
\end{proof}

\bibliographystyle{ieeetr}

\begin{thebibliography}{1}

\bibitem{Ephremides}
A. Ephremides and B. Hajek, ``Information Theory and Communication Networks: An Unconsummated Union,'' \emph{IEEE Trans. Inf. Theory}, vol. 44, no. 6, pp. 2416-2434, 2000.

\bibitem{Goldsmith}
A. Goldsmith, M. Effros, R. Koetter, M. M\'{e}dard, A. Ozdaglar, and L. Zheng, ``Beyond Shannon: The Quest for Fundamental Performance Limits of Wireless Ad Hoc Networks,'' \emph{IEEE Commun. Mag.}, vol. 49, no. 5, pp. 195-205, 2011.

\bibitem{Gupta}
P. Gupta and P. R. Kumar, ``The Capacity of Wireless Networks,''  \emph{IEEE Trans. Inf. Theory}, vol. 46, no. 2, pp. 388-404, 2000.

\bibitem{Kulkarni}
S. R. Kulkarni and P. Viswanath, ``A Deterministic Approach to Throughput Scaling in Wireless Networks,'' \emph{IEEE Trans. Inf. Theory}, vol. 50 , no. 6, pp. 1041-1049, 2004 .

\bibitem{ElGamal}
A. El Gamal, J. Mammen, B. Prabhakar, and D. Shah, ``Optimal Throughput-Delay Scaling in Wireless Networks-Part I: The Fluid Model,'' \emph{IEEE Trans. Inf. Theory}, vol. 52, no. 6, pp. 2568-2592, 2006.

\bibitem{Fran}
M. Franceschetti, O. Dousse, and D. N. C. Tse, ``Closing the Gap in the
Capacity of Wireless Networks via Percolation Theory,'' \emph{IEEE Trans. Inf. Theory}, vol. 53, no. 3, pp. 1009-1018, 2007.

\bibitem{Ozgur1}
A. \"{O}zg\"{u}r, O. L\'{e}v\^{e}que, and D. Tse, ``Hierarchical Cooperation Achieves Optimal Capacity Scaling in Ad-hoc Networks,'' \emph{IEEE Trans. Inf. Theory}, vol. 53, no. 10, pp. 3549-3572, 2007.

\bibitem{Ozgur2}
A. \"{O}zg\"{u}r, O. L\'{e}v\^{e}que, ``Throughput-Delay Tradeoff for Hierarchical Cooperation in Ad Hoc Wireless Networks,'' \emph{IEEE Trans. Inf. Theory}, vol. 56, no. 3, pp. 1369-1377, 2010.

\bibitem{Ghaderi}
J. Ghaderi, L.-L. Xie, and X. Shen, ``Hierarchical Cooperation in Ad Hoc Networks: Optimal Clustering and Achievable Throughput,'' \emph{IEEE Trans. Inf. Theory}, vol. 55, no. 8, pp. 3425-3436, 2009.

\bibitem{Ozgur3}
A. \"{O}zg\"{u}r, O. L\'{e}v\^{e}que, and D. Tse, ``Spatial Degrees of Freedom of Large Distributed MIMO Systems and Wireless Ad hoc Networks,'' \emph{IEEE J. Sel. Areas Commun.}, vol. 31, no. 2, pp. 202-214, 2013.

\bibitem{Gowikar}
R. Gowaikar, B. M. Hochwald, and B. Hassibi, ``Communication Over a
Wireless Network with Random Connections,'' \emph{IEEE Trans. Inf. Theory}, vol. 52, no. 7, pp. 2857-2871, 2006.

\bibitem{Ebrahimi}
M. Ebrahimi, M. A. Maddah-Ali, and A. K. Khandani, ``Throughput Scaling Laws for Wireless Networks With Fading Channels,'' \emph{IEEE Trans. Inf. Theory}, vol. 53, no. 11, pp. 4250-4254, 2007.

\bibitem{Cui_Opportunistic}
S. Cui, A. M. Haimovich, O. Somekh, and H. V. Poor, ``Opportunistic
Relaying in Wireless Networks,'' \emph{IEEE Trans. Inf. Theory}, vol. 55, no. 11, pp. 5121-5137, 2009.

\bibitem{Cui}
S. Cui, A. M. Haimovich, O. Somekh, H. Vincent Poor, and S. Shamai (Shitz), ``Throughput Scaling of Wireless Networks With Random Connections,'' \emph{IEEE Trans. Inf. Theory}, vol. 56, no. 8, pp. 3793-3806, 2010.

\bibitem{Ebrahimi_report}
M. Ebrahimi, M. A. Maddah-Ali, and A. K. Khandani, ``Throughput Scaling in Decentralized Single-Hop Wireless Networks with Fading Channels,'' Technical Report UW-ECE 2006-13, 2006.

\bibitem{Ebrahimi_2011}
M. Ebrahimi and A. K. Khandani, ``Rate-Constrained Wireless Networks With Fading Channels: Interference-Limited and Noise-Limited Regimes,'' \emph{IEEE Trans. Inf. Theory}, vol. 57, no. 12, pp. 7714-7732, 2011.

\bibitem{Pooya}
S. P. Shariatpanahi, B. H. Khalaj, K. Alishahi, and H Shah-Mansouri, ``One-Hop Throughput of Wireless Networks with Random Connections,'' \emph{IEEE Wireless Commun. Lett.}, vol. 1, no. 3, pp. 205-208, 2012.

\bibitem{Abouei}
J. Abouei, A. Bayesteh, M. Ebrahimi, and A. K. Khandani, ``On the Throughput Maximization in Multi-User Wireless
Networks,'' University of Waterloo, Tech. Rep. UW-ECE, 2007, available at http://www.cst.uwaterloo.ca/pubtech-rep.html.


\bibitem{Gesbert}
D. Gesbert and M. Kountouris, ``Rate Scaling Laws in Multicell Networks Under
Distributed Power Control and User Scheduling,'' \emph{IEEE Trans. Inf. Theory}, vol. 57, no. 1, pp. 234-244, 2011.

\bibitem{Sohn}
I. Sohn, J. Andrews, and K. B. Lee, ``Capacity Scaling of MIMO Broadcast Channels
with Random User Distribution,'' in Proc. \emph{IEEE Int. Symposium Inform. Theory}, 2010, pp.
2133-2137.

\bibitem{Kountouris}
M. Kountouris and J. Andrews, ``Throughput Scaling Laws for Wireless Ad-hoc Networks
with Relay Selection,'' in Proc. \emph{IEEE Veh. Tech. Conf.}, 2009, pp. 1-5.

\bibitem{Aggarwal}
R. Aggarwal, C. E. Koksal, and P. Schniter, ``On the Design of Large Scale Wireless
Systems,'' \emph{IEEE J. Sel. Areas Commun.}, vol. 31, no. 2, pp. 215-225, 2013.

\bibitem{Cormen}
D. E. Knuth, ``Big Omicron and Big Omega and Big Theta,'' \emph{ACM SIGACT News 8}, pp. 18-24, 1976.

\bibitem{Falk}
M. Falk, ``A Note on Uniform Asymptotic Normality of Intermediate Order Statistics,'' \emph{Ann. Ins. Statist. Math.}, vol. 41, pp. 19-29, 1989.

\bibitem{Durrett}

R. Durrett, \emph{Probability: Theory and Examples}, 3rd edition, Belmont: Duxbury Press, 2005.

\bibitem{Abramowitz}
M. Abramowitz and I. A. Stegun, eds. \emph{Handbook of Mathematical Functions with Formulas, Graphs, and Mathematical Tables.} New York: Dover, 1972.

\bibitem{Walck}
C. Walck, \emph{Hand-book on STATISTICAL DISTRIBUTIONS for experimentalists}, Internal Report SUF–PFY/96–01, University of Stockholm, 2007. (available at: www.stat.rice.edu/~dobelman/textfiles/DistributionsHandbook.pdf)

\bibitem{MaxFS1}
E. Amaldi and V. Kann, ``The Complexity and Approximability of Finding Maximum Feasible Subsystems of Linear Relations,'' \emph{Theoretical Computer Science}, vol. 147, no. 1-2, pp. 181-210, 1995.

\bibitem{MaxFS2}
E. Amaldi, M. Bruglieri, and G. Casale, ``A Two-phase Relaxation-Based Heuristic for the Maximum Feasible Subsystem Problem,'' \emph{Computers and Operations Research}, vol. 35, no. 5, pp. 1465-1482, 2008.

\bibitem{MaxFS3}
J. W. Chinneck, ``Fast Heuristics for the Maximum Feasible Subsystem Problem,'' \emph{INFORMS Journal on Computing}, vol. 13, no. 3, pp. 210-223, 2001.

\bibitem{Arnold}
B. C. Arnold, N. Balakrishnan, and H. N. Nagaraja, \emph{A First Course in Order Statistics}, SIAM, 2008.

\bibitem{Hollander}
F. den Hollander, \emph{Large Deviations}, Providence: American Mathematical Society, 2000.

\bibitem{Mikosch}
T. Mikosch and A. V. Nagaev, ``Large Deviations of Heavy-Tailed Sums with Applications in Insurance,'' \emph{Extremes}, vol. 1, no. 1, pp. 81-110, 1998.

\bibitem{Feller}
W. Feller, \emph{An Introduction to Probability Theory and Its Applications II}, New York: Wiley, 1971.



\end{thebibliography}

\end{document}